\title{\bf Gamma Acyclicity, Annotated Relations, and Consistency Witness Functions}
\author{Albert Atserias  \\
Universitat Polit\`ecnica de Catalunya \\
Centre de Recerca Matem\`atica \\
Barcelona, Catalonia, Spain 
\and
Phokion G.\ Kolaitis \\
UC Santa Cruz \& IBM Research \\
Santa Cruz, California, USA
}
\begin{document}

\maketitle

\newcommand\phk[1]{\textcolor{blue}{(Phokion) #1}}
\newcommand\albert[1]{\textcolor{green}{(Albert) #1}}


\newcommand{\phrcomment}[1]{\marginpar{\tiny \textbf{M:} #1}}
\newcommand{\aacomment}[1]{\marginpar{\tiny \textbf{A:} #1}}

\newtheorem{lemma}{Lemma}
\newtheorem{claim}{Claim}
\newtheorem{corollary}{Corollary}
\newtheorem{theorem}{Theorem}
\newtheorem{proposition}{Proposition}
\newtheorem{definition}{Definition}
\newtheorem{fact}{Fact}



\newenvironment{examplebf}{\refstepcounter{example}\par\bigskip
\noindent\textbf{Example~\theexample.} \rmfamily}{\hfill$\dashv$\medskip}

\newcommand{\supp}{{\mathrm{Supp}}}
\newcommand{\tuples}{{\mathrm{Tup}}}
\newcommand{\domain}{{\mathrm{Dom}}}
\newcommand{\Entropy}{{\mathrm{H}}}
\newcommand{\KL}{{\mathrm{D}}}
\newcommand{\ftp}{transportation property}
\newcommand{\icp}{inner consistency property}
\newcommand{\ltgc}{local-to-global consistency property}

\newcommand{\gcpr}{global consistency problem for relations}
\newcommand{\gcpb}{global consistency problem for bags}
\newcommand{\glcpr}{{\sc GCPR}}
\newcommand{\glcpb}{{\sc GCPB}}

\newcommand{\intcone}{\mathrm{intcone}}
\newcommand{\norm}[1]{\Vert #1 \Vert}
\newcommand{\bnorm}[1]{\norm{#1}_{\mathrm{b}}}
\newcommand{\unorm}[1]{\norm{#1}_{\mathrm{u}}}
\newcommand{\suppnorm}[1]{\norm{#1}_{\mathrm{supp}}}
\newcommand{\munorm}[1]{\norm{#1}_{\mathrm{mu}}}
\newcommand{\mbnorm}[1]{\norm{#1}_{\mathrm{mb}}}

\newcommand{\free}[1]{\mathbb{F}(#1)}

\newcommand{\transpose}{\mathrm{T}}
\newcommand{\Id}{\mathrm{I}}
\newcommand{\trace}{\mathrm{tr}}

\newcommand{\onto}{\stackrel{\scriptscriptstyle{s}}\rightarrow}

\newcommand{\standardjoin}[1]{\Join_{#1,\mathrm{S}}}
\newcommand{\vorobyevjoin}[1]{\Join_{#1,\mathrm{V}}}
\newcommand{\componentwisejoin}[2]{\Join^{#1}_{#2}}

\newcommand{\commentout}[1]{}

\newcommand{\icpadj}{innerly consistent}

\newcommand{\newatop}[2]{\genfrac{}{}{0pt}{2}{#1}{#2}}

\newcommand{\wit}{{\mathrm Wit}}

\newcommand{\cj}{\Join_{\mathrm c}}

\newcommand{\cjoin}{${\mathrm c}$-join}

\newcommand{\umon}{uniformly monotone}








\thanks{}

\begin{abstract}

During the early days of relational database theory it was realized that ``acyclic'' database schemas possess a number of desirable  semantic  properties. In fact, three different notions of ``acyclicity'' were identified and extensively investigated during the 1980s, namely,  $\alpha$-acyclicity, $\beta$-acyclicity,  and $\gamma$-acyclicity. Much more recently, the study of $\alpha$-acyclicity was extended to annotated relations, where the annotations are values from some positive commutative monoid. 
The recent results about $\alpha$-acyclic schemas and annotated relations give rise  to
results about $\beta$-acyclic schemas and annotated relations, since a schema is $\beta$-acyclic if and only if every sub-schema of it is $\alpha$-acyclic.
Here, we study $\gamma$-acyclic schemas and annotated relations. Our main finding is that the desirable semantic properties of $\gamma$-acyclic schemas extend to annotated relations, provided the annotations come from a positive commutative monoid that has the transportation property. Furthermore, the results reported here shed light on the role of the join of two standard relations, Specifically, our results reveal that
the only relevant property of the join of two standard relations is that it is a witness to the consistency of the two relations, provided that these two relations are consistent. For the more abstract setting of annotated relations, this property of the standard join is captured by the notion of a consistency witness function, a notion which we systematically investigate in this work.

\end{abstract}

\maketitle

\newpage

\section{Introduction} \label{sec:intro}

Annotated databases are databases in which each fact in a relation is annotated with a value from some algebraic structure. Starting with the influential work on database provenance \cite{DBLP:conf/pods/GreenKT07,DBLP:journals/sigmod/KarvounarakisG12}, there has been an extensive investigation of several different aspects of annotated databases, including 
the study of conjunctive query containment for annotated databases \cite{DBLP:journals/mst/Green11,DBLP:journals/tods/KostylevRS14} and  the evaluation of Datalog programs on annotated databases \cite{DBLP:journals/jacm/KhamisNPSW24}. In these investigations, the annotations are values  from  some fixed semiring ${\mathbb K}=(K,+, \times, 0,1)$. Thus, standard relational databases are annotated databases  in which the annotations are $1$ (true) and $0$ (false), while bag databases are annotated databases in which the  annotations are non-negative integers denoting the multiplicities. This framework, which is often referred to as \emph{semiring semantics}, has spanned  first-order logic \cite{DBLP:journals/corr/abs-1712-01980} and least fixed-point logic
\cite{DBLP:conf/csl/DannertGNT21}.

During the early days of relational database theory it was realized that ``acyclic'' database schemas possess a number of desirable  semantic  properties. In fact, three different notions of ``acyclicity'' were identified and extensively investigated during the 1980s, namely,  acyclicity (also known as $\alpha$-acyclicity), $\beta$-acyclicity,  and $\gamma$-acyclicity. On undirected graphs (equivalently, on database schemas consisting of 
binary relation symbols only) these notions coincide with the notion of an acyclic graph, but they form a strict hierarchy on hypergraphs (equivalently, on arbitrary database schemas) with $\beta$-acyclicity  being a stricter notion than acyclicity, and $\gamma$-acyclicity  being a stricter notion than  $\beta$-acyclicity.

The study of acyclic schemas was initiated by Yannakakis, who focused on the evaluation of acyclic joins \cite{DBLP:conf/vldb/Yannakakis81}.  After this, Fagin, Beeri, Maier Yannakakis \cite{BeeriFaginMaierYannakakis1983}
showed that acyclic schemas are precisely the ones possessing the \emph{\ltgc}, that is, every collection of pairwise consistent relations $R_1,\ldots, R_m$ over such schemas is globally consistent (i.e.,  there a relation $T$ whose projection on the attributes of $R_i$ is equal to $R_i$, for $1 \leq i \leq m$). Fagin et al.\ \cite{BeeriFaginMaierYannakakis1983} also characterized acyclicity in terms of the existence of \emph{monotone sequential join expressions}, i.e., expressions of the form 
$(((\cdots (R_1\cj R_2) \Join \cdots )\Join R_{m-1} )\Join R_m)$ with the property that if the relations $R_1,\ldots,R_m$ are pairwise consistent, then every intermediate sequential join expression $((\cdots (R_1\cj R_2) \Join \cdots )\Join R_{i-1})$ produces a relation that is consistent with the relation $R_i$.
Results about acyclicity yield results about $\beta$-acyclicity, since a schema is $\beta$-acyclic if and only if every sub-schema of it is acyclic. Fagin \cite{DBLP:journals/jacm/Fagin83} then
studied  $\gamma$-acyclicity  and showed that a schema is $\gamma$-acyclic if and only every \emph{connected} sequential join expression is monotone. Intuitively,  this means  that every sequential join expression is monotone, provided no join between relations with disjoint sets of attributes is allowed.

Atserias and Kolaitis \cite{AK25} studied the interplay between local consistency and global consistency for  annotated relations. Since the definition of consistency of  annotated relations uses only the projection operation on relations and since projection is defined using only addition $+$, they focused on $\mathbb K$-relations, i.e., annotated relations in which the annotations come from a monoid ${\mathbb K=(K, +, 0)}$.
 They identified a condition on monoids,  called the \emph{transportation property}, and showed that a positive monoid ${\mathbb K}=(K,+,0)$ has the transportation property if and only if every acyclic schema $H$ has the \ltgc~for $\mathbb K$-relations (i.e., every pairwise consistent collection of $\mathbb K$-relations over $H$ is globally consistent). It was not clear, however, whether  the results about acyclic schemas and sequential join expressions in \cite{BeeriFaginMaierYannakakis1983} can be extended to annotated relations, since, as shown in \cite{DBLP:conf/pods/AtseriasK21}, the analog of the standard join for bags need not  be a witness to the consistency of two consistent bags.
 In a subsequent paper, Atserias and Kolaitis \cite{DBLP:journals/sigmod/AtseriasK25} 
introduced the notion of a \emph{consistency witness function} on a positive monoid ${\mathbb K}$, which is a function $W$ that, given two $\mathbb K$-relations $R$ and $S$, returns a $\mathbb K$-relation $W(R,S)$ that is a consistency witness for $R$ and $S$, provided that $R$ and $S$ are consistent $\mathbb K$-relations; they also introduced the notion of a \emph{monotone sequential \cjoin~expression}, which is analogous to that of a monotone sequential join expression with some arbitrary consistency witness function in place of the standard join. Using these notions, it was shown in \cite{DBLP:journals/sigmod/AtseriasK25} that the characterization of acyclicity in terms of monotone sequential join expressions in \cite{BeeriFaginMaierYannakakis1983} extends  to  characterizations of acyclicity in terms of monotone sequential \cjoin~expressions on monoids  possessing the transportation property; furthermore, the transportation property itself can be characterized in such terms.

Here, we investigate $\gamma$-acyclic schemas and establish that
 the desirable semantic properties of $\gamma$-acyclic schemas extend to annotated relations.
 The two main results  are as follows:
    \begin{enumerate}
\item If $\mathbb K$ is a positive commutative monoid and $H$ is a schema 
such that every connected sequential \cjoin-expression over $H$ is monotone on $\mathbb K$ w.r.t.\ \emph{some} consistency witness function on $\mathbb K$, then $H$ is $\gamma$-acyclic.
\item If $\mathbb K$ is a positive commutative monoid that has the transportation property and $H$ is a $\gamma$-acyclic schema, then 
every connected sequential \cjoin-expression over $H$ is monotone on $\mathbb K$
w.r.t.\ \emph{every} consistency witness function on $\mathbb K$.
    \end{enumerate}
As a byproduct of these two main results, we obtain a characterization of the transportation property in terms of $\gamma$-acyclicity and connected sequential \cjoin~expressions. 
 Furthermore, our work sheds light on the role of the join of two standard relations, Specifically, our results reveal that, in the study of the various notions of acyclicity in \cite{DBLP:journals/jacm/Fagin83},
the only relevant property of the join of two standard relations is that it is a witness to the consistency of the two relations, provided that these two relations are consistent. In the setting of  annotated relations, this property of the standard join is captured by the notion of a consistency witness function.

 The rest of the paper is organized as follows. Section 2 contains the definitions of the basic notions, while Section 3 contains the definition of a consistency witness function and related notions. To make the paper as self-contained as possible,  the earlier results about acyclic schemas are summarized in Section 4. Section 5 discusses $\beta$-acyclic schemas. Section 6 contains the main results about $\gamma$-acyclic schemas and annotated relations.

\section{Basic Notions} \label{sec:prelims}

\paragraph{Monoids} 
A \emph{commutative monoid} is a structure $\mathbb{K}=(K,+,0)$, where $+$ is a binary operation on the universe $K$ of $\mathbb K$  that is
 associative, commutative, and has $0$ as its neutral  element, i.e., $p+ 0 = p = 0 + p$ holds for all $p\in K$.  
 A commutative monoid $\mathbb K = (K,+,0)$ is \emph{positive} 
 if for all elements $p,q\in K$ with
$p+q=0$, we have that  $p=0$ and $q=0$.  From now on, we  assume that all commutative monoids considered have at least two elements in their universe.

The following are examples of positive commutative monoids.
\begin{itemize}
\item The \emph{Boolean monoid}
 $\mathbb{B} = (\{0,1\},\vee,0)$ with disjunction $\vee$ as its operation and~$0$~(false) as its neutral element.
 \item The \emph{bag  monoid}  $\mathbb{N}=(Z^{\geq 0}, +, 0)$,  where $Z^{\geq 0}$ is the set of non-negative integers   and $+$ is the standard addition operation. 
 Note that the structure ${\mathbb Z}=(Z,+,0)$, where $Z$ is the set of integers, is a commutative monoid, but not a positive one. 
 \item A \emph{numerical semigroup} is a submonoid ${\mathbb K}=(K,+,0)$ of  the bag monoid $\mathbb{N}=(Z^{\geq 0}, +, 0)$, such that $K$ is a cofinite set, i.e., the complement $Z^{\geq 0}\setminus K$ is  finite. A concrete example of a numerical semigroup is ${\mathbb K}=(\langle 3,5\rangle, +,0)$, where $\langle 3,5\rangle$ is  the set of all non-negative integers of the form $3m+5n$ with  $m\geq 0$ and $n\geq 0$,  i.e., $\langle 3,5\rangle=\{0,3,5,6,8, 9, 10, \ldots \}$.
 \item 
  The structures ${\mathbb T}= (R\cup\{\infty\}, \min, \infty)$ and  ${\mathbb V}=([0,1], \max, 0)$, where $R$ is the set of all real numbers, $[0,1]$ is the interval of all real numbers between $0$ and $1$, and $\min$ and $\max$ are the standard minimum and maximum operations.
  \item The \emph{power set monoid} ${\mathbb P}(A)= (\mathcal{P}(A), \cup,\emptyset)$, where 
  if $A$ is a set, then $\mathcal{P}(A)$ is its powerset, and $\cup$ is the union operation on sets.
\end{itemize}

\paragraph{$\mathbb K$-relations and marginals of $\mathbb K$-relations}
An \emph{attribute}~$A$ is a symbol with an associated
set~$\domain(A)$ as its \emph{domain}. If~$X$ is a finite set of
attributes, then~$\tuples(X)$ is the set
of~\emph{$X$-tuples}, i.e., the set of
functions that take each attribute~$A \in X$ to an element of its
domain~$\domain(A)$. $\tuples(\emptyset)$ is non-empty as it
contains the \emph{empty tuple}, i.e., the  function with empty
domain. If~$Y \subseteq X$  and~$t$ is
an~$X$-tuple, then the \emph{projection of~$t$ on~$Y$}, denoted
by~$t[Y]$, is the unique~$Y$-tuple that agrees with~$t$ on~$Y$. In
particular,~$t[\emptyset]$ is the empty tuple.

Let~${\mathbb K} = (K,+,0)$ be a positive commutative monoid and let~$X$ be a finite set
of attributes.
\begin{itemize}
    \item 
A~\emph{$\mathbb{K}$-relation over~$X$} is a
function~$R : \tuples(X) \rightarrow K$ that assigns a value~$R(t)$ in~$K$
to every~$X$-tuple~$t$ in~$\tuples(X)$. 
We will often write $R(X)$ to indicate that $R$ is a $\mathbb K$-relation over $X$, and we will refer to $X$ as the set of attributes of $R$.

If~$X$ is the empty set of attributes, then a~$\mathbb{K}$-relation 
over~$X$ is simply a function that assigns a  single value from~$K$  to the empty tuple. 
Note that the $\mathbb B$-relations are  the standard relations, while the $\mathbb N$-relations are the \emph{bags} or 
\emph{multisets}, i.e., each tuple has a non-negative integer associated with it that denotes the  \emph{multiplicity} of the tuple.
\item 
The \emph{support} $\supp(R)$ of
a~$\mathbb{K}$-relation~$R(X)$ is the set
of~$X$-tuples~$t$ that are assigned non-zero value, i.e.,
$\supp(R) := \{ t \in \tuples(X) : R(t) \not= 0 \}$.
We will often write~$R'$ to
denote~$\supp(R)$. Note that~$R'$ is a standard relation
over~$X$. A~$\mathbb{K}$-relation is \emph{finitely supported} if its support is a
finite set. In this paper, all~$\mathbb{K}$-relations considered will be  finitely supported, 
and we omit the term; thus, from now on, a $\mathbb{K}$-relation is a finitely supported $\mathbb{K}$-relation. 
When~$R'$ is empty, we say that~$R$ is the empty~$\mathbb{K}$-relation over~$X$. 
\item 
If~$Y\subseteq X$, then the \emph{marginal $R[Y]$ of $R$ on $Y$}  is the~$\mathbb{K}$-relation 
over~$Y$ such that for every~$Y$-tuple~$t$, we have that
   \begin{equation}
R[Y](t) := \sum_{\newatop{r \in R':}{r[Y] = t}} R(r).
\label{eqn:marginal}
\end{equation}
The value $R[Y](t)$ is  the \emph{marginal of $R$ over $t$}. For notational simplicity, we will often write $R(t)$ for the marginal of $R$ over $t$, instead of $R[Y](t)$. It will be clear from the context (e.g., from the arity of the tuple $t$) if $R(t)$ is indeed the marginal of $R$ over $t$ (in which case $t$ must be a $Y$-tuple) or $R(t)$ is the actual value of $R$ on $t$ as a mapping from $\tuples(X)$ to $K$ (in which case $t$ must be an $X$-tuple).
   Note that if $R$ is a standard  relation (i.e., $R$ is a $\mathbb B$-relation), then the marginal $R[Y]$ is the projection of $R$ on $Y$.
\end{itemize}
   The proof of the next useful proposition follows easily from the definitions.

\begin{proposition} \label{lem:easyfacts1} 
Let $\mathbb K$ be a 
 positive commutative monoid and let $R(X)$ be a  $\mathbb K$-relation. Then the
  following  hold:
  \begin{enumerate} \itemsep=0pt
  \item For all~$Y \subseteq X$, we have~$R'[Y] = R[Y]'$.
  \item For all $Z \subseteq Y \subseteq X$, we have $R[Y][Z] = R[Z]$.
\end{enumerate}
\end{proposition}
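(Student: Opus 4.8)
The plan is to prove both parts directly from the definition of the marginal in~\eqref{eqn:marginal}, with positivity of $\mathbb{K}$ doing the only real work. For part~1, I would fix $Y \subseteq X$ and a $Y$-tuple $t$, and simply compare when $t$ belongs to each of the two sets. Since $R'$ is a standard relation, $R'[Y]$ is the ordinary projection of $R'$ onto $Y$, so $t \in R'[Y]$ iff the index set $\{r \in R' : r[Y] = t\}$ is nonempty; on the other hand, $t \in R[Y]'$ iff $R[Y](t) = \sum_{\newatop{r \in R':}{r[Y] = t}} R(r) \neq 0$. If that index set is empty, the sum is the empty sum $0$ and $t$ lies in neither set. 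If it is nonempty, then $t \in R'[Y]$, and every summand $R(r)$ is nonzero by definition of $R'$; I would then invoke positivity to conclude that the whole sum is nonzero, hence $t \in R[Y]'$ as well. This last step is exactly where positivity enters: a nonempty finite sum of nonzero elements of a positive commutative monoid is nonzero, which follows by an easy induction on the number of summands using that $p+q=0$ forces $p=q=0$. As the two sets have the same members, they coincide.

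For part~2, I would fix $Z \subseteq Y \subseteq X$ and a $Z$-tuple $t$, first recording the elementary fact that restriction is transitive on tuples, i.e.\ $r[Y][Z] = r[Z]$ whenever $Z \subseteq Y$. Expanding the definition twice gives
\[
R[Y][Z](t) \;=\; \sum_{\newatop{s \in R[Y]':}{s[Z] = t}} \ \sum_{\newatop{r \in R':}{r[Y] = s}} R(r),
\]
and the goal is to show this double sum equals $\sum_{\newatop{r \in R':}{r[Z] = t}} R(r) = R[Z](t)$ by regrouping. Using part~1 in the form $R[Y]' = R'[Y]$, the pairs $(s,r)$ indexing the double sum are precisely those with $r \in R'$, $s = r[Y]$, and $r[Y][Z] = t$; by transitivity of restriction the latter condition is just $r[Z] = t$, and $s$ is uniquely determined by $r$. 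Conversely, any $r \in R'$ with $r[Z] = t$ yields such a pair with $s = r[Y]$, since $r[Y] \in R'[Y] = R[Y]'$ and $(r[Y])[Z] = r[Z] = t$. Thus the correspondence $(s,r) \mapsto r$ is a bijection between the two index sets that preserves the summand $R(r)$, and commutativity and associativity of $+$ (legitimate because all sums are finite, as $R$ is finitely supported) yield the claimed equality.

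The whole argument is routine; the only points that need any care are the inductive use of positivity in part~1 to pass from ``each summand nonzero'' to ``the sum nonzero'', and the bookkeeping of the bijection used to regroup the double sum in part~2. I would prove part~1 first, as it is used in part~2.
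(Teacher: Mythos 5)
Your proposal is correct and follows essentially the same route as the paper's own proof: part~1 reduces to the observation that, by positivity, a sum with a nonzero summand is nonzero, and part~2 expands the marginal twice, substitutes $R[Y]'=R'[Y]$ via part~1, and regroups the double sum by the fibers of the projection onto $Y$. Your extra care about the induction behind ``nonempty sum of nonzero elements is nonzero'' and the explicit bijection of index sets only makes explicit what the paper leaves implicit.
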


\commentout{
\begin{proof}
  For the first part, the inclusion~$R[Y]' \subseteq R'[Y]$ is obvious. For the converse, assume that~$t \in R'[Y]$, so
  there exists~$r$ such that~$R(r) \not= 0$ and~$r[Y] =
  t$. By~\eqref{eqn:marginal} and the positivity of~$\mathbb K$, we have
  that~$R(t) \not= 0$. Hence~$t \in R[Y]'$.  
  
  For the second part, we have
  \begin{equation}
  R[Y][Z](u) = \sum_{\newatop{v \in R[Y]':}{v[Z]=u}} R[Y](v) =
  \sum_{\newatop{v \in R'[Y]:}{v[Z]=u}} \sum_{\newatop{w \in R':}{w[Y]=v}} R(w) =
  \sum_{\newatop{w \in R':}{w[Z]=u}} R(w) = R[Z](u)
\end{equation}
where the first equality follows from~\eqref{eqn:marginal}, the second
follows from the first part of this lemma to replace~$R[Y]'$ by~$R'[Y]$, and
again~\eqref{eqn:marginal}, the third follows from partitioning the
tuples in~$R'$ by their projection on~$Y$, together
with~$Z \subseteq Y$, and the fourth follows from~\eqref{eqn:marginal}
again.
\end{proof}
}

\noindent If~$X$ and~$Y$ are sets of attributes, then we write~$XY$ as
shorthand for the union~$X \cup Y$. 
\commentout{
Accordingly, if~$x$ is
an~$X$-tuple and~$y$ is a~$Y$-tuple such
that~$x[X \cap Y] = y[X \cap Y]$, then we write~$xy$ to denote
the~$XY$-tuple that agrees with~$x$ on~$X$ and on~$y$ on~$Y$.  We
say that~\emph{$x$ joins with~$y$}, and that~\emph{$y$ joins
  with~$x$}, to \emph{produce} the tuple~$xy$.}


\paragraph{Schemas and hypergraphs}
\begin{itemize}
\item A \emph{schema} is a sequence~$X_1,\ldots,X_m$ of non-empty sets of attributes.  
\item A \emph{hypergraph} is a pair $H=(V,F)$, where $V$ is a finite non-empty set and $F$ is a set of non-empty subsets of $V$. We call $V$ the set of the \emph{nodes} of $H$ and we call $F$ the set of the \emph{hyperedges} of $H$.
\end{itemize}
A schema~$X_1,\ldots,X_m$ can be identified with  the hypergraph $H=(\bigcup_{i=1}^m X_i,\{X_1,\ldots,X_m\})$, i.e.,  the nodes of $H$  are the attributes  and  the hyperedges of $H$ are the members~$X_1,\ldots,X_m$ of the schema.
We will use the terms \emph{schema} and
\emph{hypergraph} interchangeably.
\begin{itemize}
\item A  \emph{collection of
$\mathbb{K}$-relations} over  a  schema~$X_1,\ldots,X_m$
is a sequence $R_1(X_1),\ldots,R_m(X_m)$  such that each~$R_i(X_i)$ is a~$\mathbb{K}$-relation over~$X_i$.
\end{itemize}

\section{Consistent Relations and Consistency Witness Functions}  \label{sec:cons}

Let $\mathbb K=(K,+,0)$ be a positive commutative monoid.
\begin{itemize}
\item Two $\mathbb K$-relations $R(X)$ and $S(Y)$ are \emph{consistent} if there is a $\mathbb K$-relation $T(XY)$ such that $T[X]=R$ and $T[Y]=S$.
Such a $\mathbb K$-relation $T$ is  a \emph{consistency witness} for $R$ and $S$.
\item A collection $R_1(X_1),\ldots,R_m(X_m)$ of
$\mathbb{K}$-relations over a  schema
$X_1,\ldots,X_m$ is \emph{globally consistent} if there is  a $\mathbb K$-relation $T(X_1\ldots X_m)$ such that
$T[X_i]=R_i$, for $i$ with $1\leq i\ \leq m$. Such a $\mathbb K$-relation $T$ is  a \emph{consistency witness} for $R_1,\ldots,R_m$.
\end{itemize}
Note that if $R_1(X_1),\ldots,R_m(X_m)$
is a globally consistent collection of $\mathbb K$-relations, then the relations~$R_1(X_1),\ldots,R_m(X_m)$ are pairwise consistent. Indeed, if $T$ is a consistency witness for
$R_1(X_1),\ldots,R_m(X_m)$, then for all $i$ and $j$ with $1\leq i,j \leq m$, we have that the $\mathbb K$-relation $T[X_iX_j]$ is a consistency witness for $R_i$
and $R_j$, because 
$$
R_i =T[X_i]=T[X_iX_j][X_i] \quad \mbox{and} \quad  
R_j  =T[X_j]=T[X_iX_j][X_j],$$
where, in each case, the first equality follows from the definition of global consistency and the second equality follows from Proposition \ref{lem:easyfacts1}. 

It is well known that the converse  fails even for standard relations, i.e., there are standard relations that are pairwise consistent but not globally consistent. The main result by Beeri et al.~\cite{BeeriFaginMaierYannakakis1983} characterizes the schemas for which the pairwise consistency of a collection of standard relations implies that they are globally consistent. 
Quite recently, this result was extended 
 to $\mathbb K$-relations over positive monoids that satisfy a condition called the \emph{transportation property} \cite{AK25}.
We will discuss this extension in the next section. For now, we consider 
the following notion, which was introduced and studied in
\cite{DBLP:journals/sigmod/AtseriasK25}.
\begin{itemize}
    \item A \emph{consistency witness function on $\mathbb K$} is a function $W$ that takes as arguments two  $\mathbb K$-relation $R(X)$  and  $S(Y)$, and returns as value a $\mathbb K$-relation $W(R,S)$ over  $XY$ such that if $R$ and $S$ are consistent $\mathbb K$-relations, then $W(R,S)$ is a consistency witness for $R$ and $S$. 
    \end{itemize}
Clearly, the standard join operation $\Join$ of standard relations is a consistency witness function on the Boolean monoid $\mathbb B$. As pointed out in \cite{DBLP:conf/pods/AtseriasK21}, however, the bag-join operation of bags (the  analog of the standard join for bags) is not a consistency witness function on the bag monoid~$\mathbb N$.

A \emph{join expression} is an expression involving standard relations and applications of the join operation $\Join$ on standard relations \cite{BeeriFaginMaierYannakakis1983, DBLP:journals/jacm/Fagin83}. In \cite{DBLP:journals/sigmod/AtseriasK25}, the notion of a \emph{\cjoin~expression} and its variants were introduced as a generalization of the notion of join expression to arbitrary consistency witness functions and $\mathbb K$-relations. The precise definitions 
are as follows.

Let $X_1,\ldots,X_m$ be a schema
and let $\cj$ be a binary function symbol, which will be interpreted by some consistency witness function.
\begin{itemize}
\item 
The collection of \emph{\cjoin~expressions over $X_1,\ldots,X_m$} is the smallest collection of strings that contains each $X_i$ and has the property that if $E_1$ and $E_2$ are in the collection, then also the string $(E_1\cj E_2)$ is  in the collection.

\item The collection of \emph{sequential \cjoin~expressions over $X_1,\ldots,X_m$} is the smallest collection of strings that contains each $X_i$ and has the property that if $E$ is  in the collection and $X$ is one of the $X_i$'s, then also the string $(E \cj X)$ is in the collection.

\end{itemize}
Note that a (sequential) \cjoin~expression over $X_1,\ldots,X_m$ need not contain every set $X_i$.

Clearly, the string 
 $((X_1\cj X_2) \cj X_3)$  is a sequential \cjoin-expression, while the string
 $((X_1\cj X_2)\cj (X_3\cj X_4))$ is a \cjoin~expression, but not a sequential one.

 Semantics to \cjoin~expressions are assigned in a straightforward way as follows

Let $X_1,\ldots,X_m$ be a schema and let $E$ be a \cjoin-expression over $X_1,\ldots,X_m$. 
If    $W$ is a consistency witness function on $\mathbb K$
and  $R_1(X_1),\ldots,R_m(X_m)$ is a collection of $\mathbb K$-relations, we write    $E(W,R_1,\ldots,R_m)$ to denote the $\mathbb K$-relation over
$X_1\cdots X_m$ obtained  by evaluating $E$ when $\cj$ is interpreted by $W$ and each $X_i$ is interpreted by $R_i$ for $i = 1,\ldots,m$. 

The next notion yields a sufficient condition for a \cjoin~expression to give rise to global consistency witnesses.
\begin{itemize}
\item Let $E$ be a \cjoin~expression over a schema $X_1,\ldots,X_m$, let $W$ be a consistency witness function on $\mathbb K$, and let~$R_1(X_1),\ldots,R_m(X_m)$ be a collection of $\mathbb K$-relations.
We say that $E$ is \emph{monotone with respect to  $W$ and  $R_1(X_1),\ldots,R_m(X_m)$} if for every sub-expression $E_1 \cj E_2$ of $E$, we have that the $\mathbb K$-relations $E_1(W,R_1,\ldots,R_m)$
and $E_2(W,R_1,\ldots,R_m)$ are consistent. 
\end{itemize}
The next proposition from \cite{DBLP:journals/sigmod/AtseriasK25} is proved in a straightforward way   by induction on the construction of \cjoin~expressions and by using Proposition \ref{lem:easyfacts1}.
\begin{proposition} \label{prop:monotone-general}
Let  $E$ be a \cjoin~expression over $X_1,\ldots,X_m$, let $W$ be a
consistency witness function on $\mathbb K$, and let $R_1(X_1),\ldots,R_m(X_m)$ be $\mathbb K$-relations. If $E$ is monotone with respect to $W$ and $R_1,\ldots,R_m$, and  every $X_i$ occurs in $E$, then 
$E(W,R_1,\ldots,R_m)$ is a global consistency witness for the $\mathbb K$-relations $R_1,\ldots,R_m$. 
\end{proposition}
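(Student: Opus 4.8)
The plan is to prove the statement by induction on the structure of the \cjoin~expression $E$, but with a slightly strengthened inductive claim: \emph{for every \cjoin~expression $E$ over $X_1,\ldots,X_m$ that is monotone with respect to $W$ and $R_1,\ldots,R_m$, the $\mathbb K$-relation $E(W,R_1,\ldots,R_m)$ is a global consistency witness for the sub-collection $\{R_i : X_i \text{ occurs in } E\}$.} The proposition is then the special case in which every $X_i$ occurs in $E$. Strengthening the claim in this way is what lets the induction close, because the sub-expressions of $E$ typically mention only some of the $X_i$'s.

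For the base case, $E$ is a single symbol $X_i$, so $E(W,R_1,\ldots,R_m) = R_i$; since $R_i[X_i] = R_i$, this $\mathbb K$-relation is trivially a consistency witness for the singleton collection consisting of $R_i$. For the inductive step, write $E = (E_1 \cj E_2)$, let $Y_1$ and $Y_2$ be the unions of the attribute sets $X_i$ occurring in $E_1$ and in $E_2$ respectively, and set $T_1 := E_1(W,R_1,\ldots,R_m)$ over $Y_1$ and $T_2 := E_2(W,R_1,\ldots,R_m)$ over $Y_2$, so that $E(W,R_1,\ldots,R_m) = W(T_1,T_2)$ by definition. Monotonicity of $E$ is inherited by $E_1$ and $E_2$ directly from the definition (the condition quantifies over all sub-expressions), so the inductive hypothesis applies and gives that $T_1$ is a global consistency witness for $\{R_i : X_i \text{ occurs in } E_1\}$ and $T_2$ for $\{R_i : X_i \text{ occurs in } E_2\}$. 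Moreover, since $E$ is a sub-expression of itself, monotonicity of $E$ forces $T_1$ and $T_2$ to be consistent, and hence, by the defining property of the consistency witness function $W$, the $\mathbb K$-relation $W(T_1,T_2)$ is a consistency witness for $T_1$ and $T_2$: its marginal on $Y_1$ equals $T_1$ and its marginal on $Y_2$ equals $T_2$.

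It remains to compute the marginals of $W(T_1,T_2)$ on the individual $X_i$'s. For each $i$ such that $X_i$ occurs in $E$, either $X_i$ occurs in $E_1$, in which case $X_i \subseteq Y_1$ and, using part~2 of Proposition~\ref{lem:easyfacts1} together with the inductive hypothesis for $E_1$,
\[
W(T_1,T_2)[X_i] = W(T_1,T_2)[Y_1][X_i] = T_1[X_i] = R_i ,
\]
or, symmetrically, $X_i$ occurs in $E_2$ and the same chain of equalities through $Y_2$ gives $W(T_1,T_2)[X_i] = R_i$. Since $X_i$ occurs in $E$ if and only if it occurs in $E_1$ or in $E_2$, this covers all the relevant indices, so $E(W,R_1,\ldots,R_m) = W(T_1,T_2)$ is a global consistency witness for $\{R_i : X_i \text{ occurs in } E\}$, completing the induction.

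I do not anticipate a real obstacle here; the only point that needs care is the bookkeeping of which $X_i$'s occur in which sub-expression, which is exactly why the inductive claim is stated for the sub-collection indexed by the occurring attribute sets rather than for the full collection. The two ingredients doing the actual work are the defining property of a consistency witness function, applied to $T_1$ and $T_2$ once monotonicity guarantees their consistency, and the composition law for marginals from Proposition~\ref{lem:easyfacts1}.
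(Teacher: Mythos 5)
Your proof is correct and follows essentially the same route as the paper's: structural induction with the strengthened hypothesis that the evaluation of a monotone sub-expression witnesses the sub-collection of relations occurring in it, using inheritance of monotonicity by sub-expressions, the defining property of $W$, and the marginal composition law of Proposition~\ref{lem:easyfacts1} to pull $W(T_1,T_2)[X_i]$ down through $Y_1$ or $Y_2$. No gaps.
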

\commentout{
This proposition is proved by induction on the construction of \cjoin~expressions.

The base case is trivial, since in this case $E$ is $X_i$ for some $i$ with $1\leq i\leq n$, hence $E(W,R_i)=R_i$, which is a consistency witness for $R_i$.

For the inductive step, assume that $E$ is $E_1 \cj E_2$, where $E_1$ and $E_2$ are \cjoin~expressions for which the inductive hypothesis holds.  To simplify the notation, let us put ${\bf R}=(R_1,\ldots,R_m)$; furthermore, we  put
${\bf R}_1=(R_i : i \in I_1)$ and ${\bf R}_2=(R_i : i \in I_2)$, where $I_1$ and $I_2$ are
the sets of indices $i$ such that  $X_i$  occurs in $E_1$ and in $E_2$, respectively. 
In this case, we have that  $E(W,{\bf R})=W(E_1(W,{\bf R}_1), E_2(W,{\bf R}_2))$.  

Since $E$ is monotone with respect to $W$ and ${\bf R}$, we have that the $\mathbb K$-relations $E_1(W,{\bf R}_1)$ and $E_2(W,{\bf R}_2)$ are consistent, hence $W(E_1(W,{\bf R}_1), E_2(W,{\bf R}_2))$ is a consistency witness for $E_1(W,{\bf R}_1)$ and $E_2(W,{\bf R}_2)$.
We must  show that 
$W(E_1(W,{\bf R}_1), E_2(W,{\bf R}_2))[X_i]=R_i$ holds, for every $i$ such that
$X_i$ occurs in $E$. Consider such an $X_i$.
Since $X_i$ occurs in $E$, it must occur in at least one of $E_1$ and $E_2$. Let's assume that $X_i$ occurs in $E_1$; the case in which it occurs in $E_2$ is  entirely similar. If  $Y$ is the set of attributes of  $E_1(W,{\bf R}_1)$, then $X_i\subseteq Y$. Furthermore, the property of an expression  being monotone with respect to a witness function and a collection of relations is inherited by its subexpressions, so $E_1$ is monotone with respect to $W$ and ${\bf R}_1$. By induction hypothesis, $E_1(W,{\bf R}_1)$ is a global consistency witness of all relations $R_j$ occurring in it, hence
\begin{equation}
E_1(W,{\bf R}_1)[X_i]=R_i. \label{eqn:first}
\end{equation}
Also, since $W(E_1(W,{\bf R}_1), E_2(W,{\bf R}_2))$ is a consistency witness for $E_1(W,{\bf R}_1)$ and $E_2(W,{\bf R}_2)$, we have that 
\begin{equation}
W(E_1(W,{\bf R}_1), E_2(W,{\bf R}_2))[Y]=E_1(W,{\bf R}_1). \label{eqn:second}
\end{equation}
By putting everything together, we have that
\begin{align*}
& W(E_1(W,{\bf R}_1),E_2(W,{\bf R}_2))[X_i] \\
& = W(E_1(W,{\bf R}_1), E_2(W,{\bf R}_2))[Y][X_i] \\
& = E_1(W,{\bf R}_1)[X_i] \\
& = R_i,
\end{align*}
where in the first equality we used Proposition~\ref{lem:easyfacts1} and 
the fact that $X_i \subseteq Y$,
in the second we used~\eqref{eqn:second}, and the third is~\eqref{eqn:first}. 
This completes the proof of Proposition~\ref{prop:monotone-general}. 
}

\section{Acyclic Hypergraphs} \label{sec:acyclic}
As mentioned in Section \ref{sec:prelims}, if $R_1,\ldots,R_m$ are standard relations that are globally consistent, then they are pairwise consistent, but the converse does  not always hold. For example, consider the
\emph{triangle} schema $\{A,B\}, \{B,C\}, \{C,A\}$ and the  standard relations $R_1(A,B)=\{ (0,0), (1,1)\}$, $R_2(B,C)=\{ (0,1),(1,0)\}$,
$R_3(C,A)=\{(0,0), (1,1)\}$. It is easy to check that these  standard relations are pairwise consistent; however, they are not globally consistent since $((R_1\Join R_2)\Join R_3) = \emptyset$.  Beeri et al.\ \cite{BeeriFaginMaierYannakakis1983} characterized the schemas for which every collection of pairwise consistent standard relations is globally consistent by showing that these are precisely the \emph{acyclic} schemas. To give the precise definition of an acyclic schema, we need to first introduce some basic notions about hypergraphs.

Let $H=(V,F)$ be a hypergraph.
\begin{itemize}
    \item A \emph{path in $H$} is a sequence
    $X_1,\ldots,X_k$ of 
    hyperedges of $H$ such that $X_i\cap X_{i+1}\not = \emptyset$, for every $i$ with $1\leq i <k$.  
    In this case, we say that there is a \emph{path from $X_1$ to $X_k$}.
    \item  We say that a set $G$ of hyperedges of $H$ is \emph{connected} if
    for every two distinct hyperedges $X$ and $X'$ of $G$, there is a path from $X$ to $X'$.
 \item A \emph{connected component} of $H$ is a maximal connected set of hyperedges of $H$
    \item We say that  $H$ is \emph{connected} if the set $F$ of the hyperedges of $H$ is connected (in other words, $H$ has a single connected component); otherwise, we say that $H$ is \emph{disconnected}.
  \item $H$ is \emph{reduced} if no hyperedge of $H$ is properly contained in some other hyperedge of $H$. 
  \item The \emph{reduction} of $H$ is the hypergraph $(V,F')$, where $F'$ consists of the hyperedges in $F$ that are not properly contained  in some other hyperedge in  $F$.
  \item If $U\subseteq V$, then the \emph{restriction of $H$ on $U$}, denoted $H\restriction U$, is
  the reduction of the hypergraph
  $(U,\{X\cap U:~ X \in F \} \setminus \{\emptyset\})$.
  \item Let $H$ be a reduced hypergraph and let $X,X'$ be two distinct hyperedges.  We say that $Y=X\cap X'$ is an \emph{articulation set}
  of $H$ if  the number of connected components of $H\restriction (V\setminus Y)$ is bigger than the number of connected components of $H$.
  \item Let $H$ be a reduced hypergraph. We say that $H$ is \emph{acyclic}   if the following condition holds: for every set 
  $U\subseteq V$, if $H\restriction U$ is 
connected and has at least two hyperedges, then it has an articulation set; otherwise, we say that $H$ is \emph{cyclic}.
\item $H$ is \emph{acyclic} if its reduction is acyclic. Such schemas are also known
as 
\emph{$\alpha$-acyclic} schemas. 
\end{itemize}

Admittedly, the notion  of
an acyclic hypergraph appears to be difficult to grasp when encountered for the first time.
Intuitively, it generalizes to hypergraphs the property that
 a graph is acyclic if and only if every connected component of it  with at least two hyperedges has an articulation point. 
It is well known that there a polynomial-time algorithm for testing if a hypergraph is acyclic; this algorithm is due
to Graham \cite{Graham1980} and, independently,  to Yu and Ozsoyoglu \cite{yu1979algorithm},  and it  is known as the GYO algorithm (see also \cite{DBLP:journals/jacm/Fagin83}).

There are  several different structural conditions that are equivalent to  acyclicity. We discuss two of these notions next.

Let $H=(V,F)$ be a hypergraph.
\begin{itemize}
\item The \emph{Gaifman graph} $G(H)$ of $H$ is the undirected graph with nodes the attributes of $H$ and such that there is an edge between two attributes of $H$ if and only if  both these attributes belong to one of the hyperedges of $H$.
\item $H$ is \emph{conformal} if  every clique of $G(H)$ is contained in one of the hyperedges of $H$.
\item $H$ is \emph{chordal} if every cycle of $G(H)$ of length at least $4$ has a \emph{chord}, i.e., there is an edge of $G(H)$ that is not an edge of the cycle.
\item $H$ has the \emph{running intersection property} if there is an ordering $Y_1,\ldots,Y_m$ of the hyperedges of $H$ such that for every $i\leq m$, there is a $j<i$ such that $(Y_1\cup \cdots \cup  Y_{i-1})\cap Y_i \subseteq Y_j$.
\end{itemize}
The proof of the next result can be found in
\cite{BeeriFaginMaierYannakakis1983}.
\begin{proposition} \label{pro:acyclicity-structural} For every hypergraph $H$, the following statements are equivalent:
\begin{enumerate}
    \item $H$ is acyclic.
    \item $H$ is conformal and chordal.
    \item $H$ has the running intersection property.
\end{enumerate}
\end{proposition}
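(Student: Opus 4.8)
The plan is to prove the equivalences along the chain $(1)\Rightarrow(2)$, $(2)\Rightarrow(3)$, $(3)\Rightarrow(2)$, $(2)\Rightarrow(1)$, using the classical structure theory of chordal graphs (perfect elimination orderings and clique trees) as the main external tool. Two elementary observations will be reused throughout. First, if $H$ is conformal then its $\subseteq$-maximal hyperedges are exactly the maximal cliques of $G(H)$: every hyperedge is a clique, and conformality forces each maximal clique to be contained in, hence to equal, a maximal hyperedge. Second, for any $U$ the restriction $H\restriction U$ is reduced by definition, its Gaifman graph is the induced subgraph $G(H)[U]$, and hence it is again chordal (induced subgraphs of chordal graphs are chordal) and again conformal (a clique of $G(H)[U]$ is a clique of $G(H)$). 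Consequently $(1)$, $(2)$ and $(3)$ are insensitive to replacing $H$ by its reduction; for $(3)$ this uses in addition that a running-intersection order of the maximal hyperedges extends to one of all hyperedges by appending each non-maximal hyperedge at the end, where it is covered by the maximal hyperedge that contains it.

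For $(1)\Rightarrow(2)$ I would argue by contraposition, directly from the articulation-set definition. If some clique $C$ of $G(H)$ lies in no hyperedge, take $U=C$: since every pair of $C$ lies in a common hyperedge, $G(H\restriction C)$ is complete on $C$, so $H\restriction C$ is connected; it cannot consist of a single hyperedge (that hyperedge would be $C$ itself), so it has at least two hyperedges; but deleting any proper subset of $C$ still leaves the Gaifman graph complete, hence connected, so $H\restriction C$ has no articulation set, contradicting acyclicity. Likewise, a chordless cycle $v_1,\ldots,v_k$ with $k\geq 4$ gives $U=\{v_1,\ldots,v_k\}$ with $G(H\restriction U)$ equal to the $k$-cycle, so $H\restriction U$ is, after reduction, the $k$-cycle viewed as a hypergraph; it is connected with $k\geq 2$ hyperedges, yet the intersection of two distinct cycle edges is empty or a single vertex whose removal leaves a path, so there is again no articulation set. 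Hence $H$ is conformal and chordal.

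For the other implications I would use chordal graph theory. For $(2)\Rightarrow(3)$: take a clique tree $T$ of $G(H)$, root it, and list the maximal cliques $Q_1,\ldots,Q_p$ so that each node's parent precedes it; the subtree property of $T$ forces the path from $Q_i$ to any earlier $Q_k$ to pass through the parent $Q_j$ of $Q_i$, so $(Q_1\cup\cdots\cup Q_{i-1})\cap Q_i\subseteq Q_j$, giving a running-intersection order, which is then extended to all hyperedges as noted above. For $(2)\Rightarrow(1)$: by the reuse observations it suffices to find an articulation set in every reduced, conformal, chordal, connected hypergraph $H$ with at least two hyperedges; take a leaf $Q$ of a clique tree of $G(H)$ with unique neighbor $Q'$, and check that $Q\cap Q'$ separates the nonempty set $Q\setminus Q'$ from the nonempty set $Q'\setminus Q$, so it is an articulation set. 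For $(3)\Rightarrow(2)$: induct on the number of hyperedges, letting $H_i$ carry the first $i$ hyperedges $Y_1,\ldots,Y_i$ and $S=(Y_1\cup\cdots\cup Y_{i-1})\cap Y_i$ be contained in the earlier hyperedge $Y_j$; a new chordless cycle or uncovered maximal clique in $G(H_i)$ must either live among the vertices present in $H_{i-1}$ and hence already occur there, or use a vertex private to $Y_i$, which is impossible because such a vertex has all its neighbors inside the clique $Y_i$; thus $H_i$ stays conformal and chordal, and $H_m=H$.

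The step I expect to be the main obstacle is reconciling the rigid articulation-set definition of acyclicity with the structural conditions: an articulation set must literally have the form $X\cap X'$ for two hyperedges, and the recursive clause quantifies over all restrictions $H\restriction U$, so one must check that these restrictions retain conformality and chordality and that the separators delivered by the clique tree really have the prescribed shape. This is exactly why I would route $(3)\Rightarrow(1)$ through $(2)$ instead of reading articulation sets off a running-intersection order directly: such an order produces natural \emph{set} separators, but recasting them as intersections of two hyperedges is delicate, whereas a clique tree supplies them in that form for free. The remaining ingredients, namely the existence of clique trees and perfect elimination orderings for chordal graphs and the bookkeeping between $H$ and its reduction, are classical.
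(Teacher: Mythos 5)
The paper itself gives no proof of this proposition; it cites Beeri, Fagin, Maier, and Yannakakis and uses the result as a black box, so there is no in-text argument to compare against. Your proof is a correct self-contained route and is genuinely different in flavour from the original: BFMY prove a long cycle of equivalences passing through GYO-reducibility and join trees, whereas you lean on the classical structure theory of chordal graphs (simplicial vertices, clique trees with the induced-subtree property) and keep the hypergraph-specific bookkeeping to a minimum. The individual steps check out: in $(1)\Rightarrow(2)$ the restrictions $H\restriction C$ and $H\restriction U$ do have at least two hyperedges and admit no articulation set, since every candidate $e\cap e'$ leaves a complete (respectively, path-shaped) Gaifman graph, hence a single component; in $(3)\Rightarrow(2)$ the containment $S\subseteq Y_j$ is exactly what gives $G(H_i)[Y_1\cup\cdots\cup Y_{i-1}]=G(H_{i-1})$, which your phrase ``already occur there'' uses silently and ought to be stated; and in $(2)\Rightarrow(1)$ the passage from ``$Q\cap Q'$ separates the Gaifman graph'' to ``$H\restriction(V\setminus(Q\cap Q'))$ has more components as a hypergraph'' works because a path of overlapping hyperedges induces a walk in the Gaifman graph. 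Three small points to tighten rather than gaps: a reduced conformal hypergraph really does have its hyperedges equal to the maximal cliques (needed before you may identify clique-tree nodes with hyperedges); if $G(H)$ is disconnected, $(2)\Rightarrow(3)$ needs a clique forest with the components ordered consecutively; and connectivity of $H\restriction U$ is what forces $Q\cap Q'\neq\emptyset$, without which the articulation-set condition could not hold. The trade-off of your route is that it outsources real content to the clique-tree theorem, which is of depth comparable to the proposition itself, whereas the cited proof is longer but more elementary.
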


The triangle schema $\{A,B\}, \{B,C\}, \{C,A\}$ is cyclic because it is not conformal (but it is chordal); 
the  \emph{4-cycle} schema $\{A,B\}, \{B,C\}, \{C,
D\}, \{D,A\}$ is cyclic because it is not chordal (but it is conformal). For every $n\geq 2$, the  \emph{$n$-path} schema  $P_n$ with hyperedges
$\{A_1,A_2\}, \{A_2,A_3\}, \ldots, \{A_{n},A_{n+1}\}$ is acyclic because it is both conformal and chordal.
One can also reason about these schemas using the running intersection property. Finally, consider the schema $\{A,B,C\}, \{C,D,E\}, \{E,F,A\}, \{A,C,E\}$. It has the running intersection property via the ordering
$\{A,B,C\}, \{A,C,E\}, \{C,D,E\}, \{E,F,A\}$, hence it is acyclic.

Beeri et al.\ \cite{BeeriFaginMaierYannakakis1983} showed that acyclicity can be characterized in terms of useful semantic properties, where by ``semantic'' we mean a property of the hypergraph whose definition involves also standard relations. Specifically, consider the following two properties.

Let $H$ be a hypergraph with $X_1,\ldots,X_m$ as its hyperedges.
\begin{itemize}
\item $H$ has the \emph{\ltgc~for standard relations}  if for every  collection $R_1(X_1),\ldots,R_m(X_m)$   of pairwise consistent standard relations over $H$, we have that this  collection is globally consistent.
\item $H$ \emph{admits a monotone join expression} if there is a \cjoin~expression $E$ such that 
\begin{enumerate}
\item[i] 
$E$  is monotone with respect to the standard join operation $\Join$ and every collection of pairwise consistent relations $R_1(X_1),\ldots,R_m(X_m)$;
\item[ii]   Every hyperedge $X_i$ of $H$ occurs in $E$.
    \end{enumerate}
    \end{itemize}
    As mentioned in Section \ref{sec:prelims}, if a collection of relations is globally consistent, then it is pairwise consistent. Thus, if a schema has the \ltgc~for standard relations, then global consistency coincides with pairwise consistency for relations over that schema. Furthermore, if a schema admits a monotone join expression, then, in view of Proposition \ref{prop:monotone-general}, this join expression can be used to construct witnesses to global consistency of collections of pairwise consistent relations.
    
With the notions of \ltgc~and monotone join expression at hand, the main result in Beeri et al.\ \cite{BeeriFaginMaierYannakakis1983} can be stated as follows.
\begin{theorem}[\cite{BeeriFaginMaierYannakakis1983}] \label{thm:BFMY}
For every hypergraph $H$, the following statements are equivalent:
\begin{enumerate}
\item $H$ is acyclic.
\item $H$ has the \ltgc~for standard relations.
\item $H$ admits a monotone sequential join expression.
\end{enumerate}
\end{theorem}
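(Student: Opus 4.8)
The plan is to prove the equivalence through the cycle of implications $(1)\Rightarrow(3)\Rightarrow(2)\Rightarrow(1)$, using the structural characterization of acyclicity in Proposition~\ref{pro:acyclicity-structural} together with the already-established Proposition~\ref{prop:monotone-general}. The implication $(3)\Rightarrow(2)$ is essentially immediate: a monotone sequential join expression for $H$ is in particular a \cjoin~expression in which every hyperedge of $H$ occurs and which is monotone with respect to the standard join $\Join$ and every pairwise consistent collection $R_1,\dots,R_m$, so Proposition~\ref{prop:monotone-general} (applied with the consistency witness function $\Join$ on $\mathbb{B}$) yields that $E(\Join,R_1,\dots,R_m)$ is a global consistency witness; hence the collection is globally consistent.

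For $(1)\Rightarrow(3)$ I would first record the elementary fact that two \emph{standard} relations $R(X)$ and $S(Y)$ are consistent if and only if their marginals agree on the overlap, i.e.\ $R[X\cap Y]=S[X\cap Y]$: the forward direction follows by projecting any witness, while for the converse the natural join $R\Join S$ is a witness, since if the overlap marginals agree then every tuple of $R$ extends to a tuple of $R\Join S$, and symmetrically for $S$. I may assume $H$ is reduced; for a general $H$, one takes the monotone sequential join expression built for the reduction and appends, one at a time, a join with each discarded hyperedge $X_i$, noting that $X_i\subseteq X_j$ for some retained $X_j$ and that pairwise consistency forces $R_i=R_j[X_i]$, so the running relation stays consistent with $R_i$ and keeps projecting back correctly. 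For reduced acyclic $H$, Proposition~\ref{pro:acyclicity-structural} supplies a running-intersection ordering $Y_1,\dots,Y_m$ of the hyperedges, and I take $E$ to be the left-deep expression $((\cdots(Y_1\cj Y_2)\cj Y_3)\cdots\cj Y_m)$ with $\cj$ read as $\Join$. The claim is that $E$ is monotone with respect to $\Join$ and every pairwise consistent collection, which I would prove by induction on $i$, showing that the prefix $E_i$ of $E$ ending at $Y_i$ has attribute set $Y_1\cup\cdots\cup Y_i$ and satisfies $E_i[Y_k]=R_k$ for all $k\le i$. In the inductive step the running-intersection property produces $j<i$ with $(Y_1\cup\cdots\cup Y_{i-1})\cap Y_i\subseteq Y_j$; combining $E_{i-1}[Y_j]=R_j$ (from the induction hypothesis) with the overlap agreement of the consistent pair $R_j,R_i$ shows that $E_{i-1}$ and $R_i$ agree on their overlap, hence are consistent, and this gives both the monotonicity of the sub-expression $E_{i-1}\cj Y_i$ and, via the natural-join witness, the marginal identities for $E_i$. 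Empty intersections, which arise when $H$ is disconnected, are trivially contained in any hyperedge, and pairwise consistency of standard relations forces the $R_i$ to be all empty or all nonempty, so no separate treatment is needed there.

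For $(2)\Rightarrow(1)$ I would argue the contrapositive: if $H$ is cyclic, then by Proposition~\ref{pro:acyclicity-structural} either $H$ is not chordal or $H$ is not conformal, and in each case I exhibit a collection of standard relations over $H$ that is pairwise consistent but not globally consistent. If $H$ is not chordal, fix a chordless cycle $v_1v_2\cdots v_kv_1$ with $k\ge4$; chordlessness forces every hyperedge to meet this cycle in at most two vertices, and in two vertices only if they form an edge of the cycle. Give each cycle vertex the domain $\{0,1\}$ and every other attribute a singleton domain, and on a hyperedge $X$ let $R_X$ impose $t[v_i]=t[v_{i+1}]$ when $X$ meets the cycle exactly in $\{v_i,v_{i+1}\}$ with $1\le i<k$, impose $t[v_k]\ne t[v_1]$ when $X$ meets the cycle exactly in $\{v_k,v_1\}$, and be full otherwise; one checks pairwise consistency (two overlapping hyperedges either carry identical constraints, or after projecting out a cycle vertex each becomes full), while any global witness would force $t[v_1]=\cdots=t[v_k]$ together with $t[v_k]\ne t[v_1]$. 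If $H$ is not conformal, fix an inclusion-minimal clique $K=\{v_1,\dots,v_k\}$ of the Gaifman graph not contained in any hyperedge; then $k\ge3$, and by minimality each $K\setminus\{v_i\}$ is contained in some hyperedge. Pick a prime $p$ dividing $k-1$ (which exists since $k-1\ge2$), give every vertex of $K$ the domain $\mathbb{Z}_p$ and every other attribute a singleton domain, fix $c_1,\dots,c_k\in\mathbb{Z}_p$ with $\sum_{i}c_i\not\equiv0\pmod p$, and on a hyperedge $X$ with $K\setminus X=\{v_i\}$ let $R_X$ impose $\sum_{l\ne i}t[v_l]\equiv c_i\pmod p$, while $R_X$ is full whenever $|K\setminus X|\ge2$. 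Summing the $k$ local congruences that any global witness would have to satisfy yields $(k-1)\sum_{l}t[v_l]\equiv\sum_{i}c_i\pmod p$, that is, $0\equiv\sum_{i}c_i\pmod p$, a contradiction.

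The step I expect to be the main obstacle is the non-conformal construction, and within it the verification of pairwise consistency of the relations $R_X$ in full generality: one must check, in particular, that when two hyperedges meet the clique in $K\setminus\{v_i\}$ and $K\setminus\{v_j\}$ respectively, projecting their constraints onto the common part $K\setminus\{v_i,v_j\}$ yields the same relation, which works precisely because a linear congruence over $\mathbb{Z}_p$ becomes the full relation once any single one of its variables is projected out. This feature, together with the choice of a prime dividing $k-1$ rather than a fixed modulus, is exactly what makes the construction go through uniformly in $k$: the more obvious ``chain of equalities plus one inequality'' gadget fails here because a single hyperedge may contain three or more vertices of $K$ and thereby acquire conflicting constraints, and a plain parity argument fails exactly when $k$ is even.
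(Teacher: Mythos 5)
Your proof is correct. Note first that the paper itself offers no proof of Theorem~\ref{thm:BFMY}: it is quoted from \cite{BeeriFaginMaierYannakakis1983}, so there is no internal argument to compare against; what follows compares your route to the classical one. Your implication $(1)\Rightarrow(3)$ is the standard argument: running-intersection ordering (via Proposition~\ref{pro:acyclicity-structural}), the fact that for standard relations consistency coincides with agreement of the overlap marginals (with $R\Join S$ as witness), and an induction showing each prefix of the left-deep expression marginalizes back onto every $R_k$ seen so far; your handling of non-reduced schemas and of empty overlaps is careful and correct, and $(3)\Rightarrow(2)$ is indeed immediate from Proposition~\ref{prop:monotone-general}. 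Where you genuinely diverge from \cite{BeeriFaginMaierYannakakis1983} is in $(2)\Rightarrow(1)$: the original proof threads this through a long cycle of equivalent conditions (join trees, GYO reduction, full reducers), whereas you argue directly from the chordal/conformal dichotomy of Proposition~\ref{pro:acyclicity-structural} and build explicit pairwise-consistent, globally inconsistent instances in each case. The chordless-cycle gadget (equalities around the cycle plus one disequality, using that a chordless cycle meets each hyperedge in at most one cycle edge) is standard. The non-conformal gadget is the more delicate and more original part: taking an inclusion-minimal uncovered clique $K$ of size $k\ge 3$, a prime $p\mid k-1$, and one linear congruence $\sum_{l\ne i}t[v_l]\equiv c_i \pmod p$ per face $K\setminus\{v_i\}$ works exactly as you say, because eliminating any single variable from such a congruence over $\mathbb{Z}_p$ yields the full relation (giving pairwise consistency even when a hyperedge contains several vertices of $K$), while summing all $k$ congruences forces $\sum_i c_i\equiv 0$, contradicting the choice of the $c_i$. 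This buys a short, self-contained, purely combinatorial proof of the hardest direction; the cost is that it delivers only the three equivalences stated here rather than the full battery of conditions in \cite{BeeriFaginMaierYannakakis1983}. I see no gap.
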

 Thus, the triangle schema and the $4$-cycle schema do not have the \ltgc~for standard relations and do not admit a monotone join expression, but each $n$-path schema does, and so does the schema $\{A,B,C\}, \{C,D,E\}, \{E,F,A\}, \{A,C,E\}$.

In \cite{AK25}, the following question was investigated: do the result in Theorem \ref{thm:BFMY} extend from standard relations to $\mathbb K$-relations, where $\mathbb K$ is an arbitrary positive commutative monoid? 

The first realization in \cite{AK25} was that acyclicity is a necessary, but not always sufficient, condition for the \ltgc~to hold for $\mathbb K$-relations. More formally, we say that  a hypergraph  $H=\{X_1,\ldots,X_m\}$ has the \emph{\ltgc~for $\mathbb K$-relations} if every collection $R_1(X_1),\ldots,R_m(X_m)$ of pairwise consistent $\mathbb K$-relations is also globally consistent. 

We can now state the precise result about the necessity of acyclicity.

\begin{theorem}[\cite{AK25}] \label{thm:acyc-necessity}
The following statements are true:
\begin{enumerate}
\item For all positive commutative monoids $\mathbb K$ and hypergraphs $H$, if $H$
has the \ltgc~for $\mathbb K$-relations, then $H$ is acyclic.
\item There are positive commutative monoids $\mathbb K$ and hypergraphs $H$ such that $H$ is acyclic and does \emph{not} have the \ltgc~for $\mathbb K$-relations. In particular, this is the case for every numerical semigroup $\mathbb K$ other than the bag monoid ${\mathbb N}=(N,+, 0)$ (e.g., take ${\mathbb K}=(\langle 3, 5\rangle, +,0)$) and for the $3$-path  hypergraph  $P_3$. 
\end{enumerate}
\end{theorem}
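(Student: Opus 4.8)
The plan is to prove the two parts separately: Part~1 by reducing to the Boolean case and then lifting, and Part~2 by an explicit construction on the $3$-path. For Part~1 I would prove the contrapositive: if $H$ is cyclic, then $H$ fails the \ltgc~for $\mathbb K$-relations, for every positive commutative monoid $\mathbb K$. By the equivalence of items~1 and~2 in Theorem~\ref{thm:BFMY}, cyclicity of $H$ supplies standard relations $R_1(X_1),\dots,R_m(X_m)$ over the hyperedges of $H$ that are pairwise consistent but not globally consistent. Fixing $a\in K\setminus\{0\}$ (which exists since $|K|\geq 2$), define $\widehat R_i$ by $\widehat R_i(t)=a$ if $t\in R_i$ and $\widehat R_i(t)=0$ otherwise, so $\supp(\widehat R_i)=R_i$. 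If $T(X_1\cdots X_m)$ were a global consistency witness for $\widehat R_1,\dots,\widehat R_m$, then Proposition~\ref{lem:easyfacts1}(1) would give $\supp(T)[X_i]=\supp(T[X_i])=\supp(\widehat R_i)=R_i$ for all $i$, so $\supp(T)$ would be a global consistency witness for $R_1,\dots,R_m$ as standard relations, a contradiction; hence $\widehat R_1,\dots,\widehat R_m$ are not globally consistent. It then remains to see that $\widehat R_i$ and $\widehat R_j$ are consistent for each pair; here the constant lift is not innocuous, since if $R_i$ and $R_j$ had different numbers of tuples in some fiber above a tuple of $X_i\cap X_j$ we would have $\widehat R_i[X_i\cap X_j]\neq\widehat R_j[X_i\cap X_j]$. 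So I would use the specific shape of the counterexamples underlying Theorem~\ref{thm:BFMY}: in the non-conformal case (a clique not contained in a hyperedge) and the non-chordal case (an induced cycle of length at least $4$) of Proposition~\ref{pro:acyclicity-structural}, the relations have two-element attribute domains, are cut out by a single parity constraint on a set of ``active'' attributes, and freeze the remaining attributes to a fixed value; for such relations the fibers of $R_i$ and $R_j$ above each tuple of $X_i\cap X_j$ have equal size (either $1$ or $2$), whence $\widehat R_i[X_i\cap X_j]=\widehat R_j[X_i\cap X_j]$, and the fiberwise transportation instances one needs to build a witness over $X_iX_j$ have the trivial shapes $(a)\to(a)$ or $(a,a)\to(a,a)$, which are solved ``diagonally'' in any monoid. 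This gives consistency of $\widehat R_i$ and $\widehat R_j$ and finishes Part~1.

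For Part~2, let $\mathbb K=(K,+,0)$ be a numerical semigroup with $K\neq Z^{\geq 0}$, put $d=\min(K\setminus\{0\})$, and note that $d\geq 2$ (since $1\in K$ would force $K=Z^{\geq 0}$) and that some $e\in K$ is not a multiple of $d$ (otherwise $K\subseteq dZ^{\geq 0}$ would not be cofinite). The arithmetic fact to isolate is that there is no matrix in $K^{d\times e}$ whose row sums are all $e$ and whose column sums are all $d$: each column sum being the minimal nonzero element $d$ forces every entry into $\{0,d\}$, so every row sum is a multiple of $d$, contradicting the row sum $e$. Now take $P_3$, with hyperedges $\{A_1,A_2\},\{A_2,A_3\},\{A_3,A_4\}$ (which is acyclic, being conformal and chordal), and let $A_2=\{v_0,v_1\}$, $A_3=\{w_0,w_1\}$, $A_1=\{\alpha_1,\dots,\alpha_d,\beta_1,\dots,\beta_e\}$, $A_4=\{\gamma_1,\dots,\gamma_e,\delta_1,\dots,\delta_d\}$. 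Define $R_2(v_0,w_0)=R_2(v_1,w_1)=de$ with $R_2=0$ elsewhere; $R_1(\alpha_i,v_0)=e$ for $i\leq d$ and $R_1(\beta_j,v_1)=d$ for $j\leq e$ with $R_1=0$ elsewhere; and $R_3(w_0,\gamma_j)=d$ for $j\leq e$ and $R_3(w_1,\delta_i)=e$ for $i\leq d$ with $R_3=0$ elsewhere; all occurring values lie in $K$.

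To complete Part~2 I would check three things. (i) $R_1$ and $R_2$ are consistent, and $R_2$ and $R_3$ are consistent, via the ``diagonal'' witnesses supported on $\{(\alpha_i,v_0,w_0):i\leq d\}\cup\{(\beta_j,v_1,w_1):j\leq e\}$ and on $\{(v_0,w_0,\gamma_j):j\leq e\}\cup\{(v_1,w_1,\delta_i):i\leq d\}$, respectively. (ii) $R_1$ and $R_3$ are consistent: their attribute sets are disjoint, so it suffices to match the two single-relation marginals, which one does by routing the $v_0$-part of $R_1$ to the $w_1$-part of $R_3$ and the $v_1$-part of $R_1$ to the $w_0$-part of $R_3$, requiring only the solvable transportation instances $(e,\dots,e)\to(e,\dots,e)$ and $(d,\dots,d)\to(d,\dots,d)$. (iii) $R_1,R_2,R_3$ are not globally consistent: for any witness $T(A_1A_2A_3A_4)$, Proposition~\ref{lem:easyfacts1}(1) applied to the three marginal equations forces $\supp(T)\subseteq\{(\alpha_i,v_0,w_0,\gamma_l)\}\cup\{(\beta_j,v_1,w_1,\delta_l)\}$, and then the array $M_{il}:=T(\alpha_i,v_0,w_0,\gamma_l)$ has all row sums equal to $R_1(\alpha_i,v_0)=e$ and all column sums equal to $R_3(w_0,\gamma_l)=d$, which is impossible by the arithmetic fact. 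Hence $P_3$ is acyclic yet fails the \ltgc~for $\mathbb K$-relations; specializing to $\mathbb K=(\langle 3,5\rangle,+,0)$ gives $d=3$, $e=5$, and the stated concrete instance.

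The main obstacle in Part~1 is that the constant lift does not preserve pairwise consistency for arbitrary pairwise-consistent standard relations, so Theorem~\ref{thm:BFMY} cannot be used purely as a black box: one must appeal to the parity-type form of its counterexamples and verify that the fibers above pairwise intersections have equal size. In Part~2 the delicate point is designing $R_1,R_2,R_3$ so that the ``bad'' transportation $(e,\dots,e)\to(d,\dots,d)$ is forced inside the $(v_0,w_0)$-block of every \emph{global} witness while each \emph{pair} avoids it --- the disjoint pair $R_1,R_3$ by rerouting through the cells $(v_0,w_1)$ and $(v_1,w_0)$ of $A_2\times A_3$ that a global witness is forbidden to use, and the adjacent pairs because their shared attribute imposes no coupling beyond matching marginals.
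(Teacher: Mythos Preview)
The paper does not prove Theorem~\ref{thm:acyc-necessity}; it is quoted from~\cite{AK25} and used as background, so there is no in-paper argument to compare against. I therefore assess your proposal on its own merits.

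Your Part~2 is correct and well executed. The arithmetic obstruction---no $K$-valued $d\times e$ matrix with all row sums $e$ and all column sums $d$ when $d=\min(K\setminus\{0\})$ and $d\nmid e$---is exactly the right lever, and your three $\mathbb K$-relations over $P_3$ are pairwise consistent via the explicit witnesses you describe (including the cross-routing for the disjoint pair $R_1,R_3$), while any putative global witness is forced by support considerations into the $(v_0,w_0)$-block and hence into the impossible transportation instance.

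Your Part~1 has a genuine gap. You correctly flag that the constant lift $R_i\mapsto\widehat R_i$ need not preserve pairwise consistency and propose to rely on the parity shape of the counterexamples behind Theorem~\ref{thm:BFMY}. But the local-to-global property concerns the \emph{entire} schema: you must assign a $\mathbb K$-relation to every hyperedge of $H$ and check consistency for every pair, not just for the hyperedges carrying the clique or the chordless cycle. Your fiber-size claim is not justified in that generality. Concretely, if $X$ is a hyperedge disjoint from the clique/cycle, your scheme gives $R_X$ a single frozen tuple, so $\widehat R_X[\emptyset]=a$, whereas each obstruction relation $\widehat R_{Y_j}$ has at least two tuples and $\widehat R_{Y_j}[\emptyset]=a+a$; unless $a$ is idempotent these fail even inner consistency. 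Likewise, in the non-conformal case with a minimal bad clique $C$ of size $k\geq 4$, the active set of $R_{Y_j}$ is $C\setminus\{C_j\}$ of size $k-1$, so the fiber of $R_{Y_j}$ above a tuple of $Y_j\cap X$ with $X\cap C=\emptyset$ has size $2^{k-2}>2$, contradicting your ``size $1$ or $2$'' assertion. To repair this you would need to specify the relations on the non-obstruction hyperedges (with non-constant annotations in general) and exhibit pairwise witnesses for \emph{all} pairs, or else argue a reduction that lets you ignore those hyperedges---neither of which your sketch provides.
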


Furthermore, in \cite{AK25}, a class of positive commutative monoids $\mathbb K$ was identified for which the acyclicity of a hypergraph $H$ is a sufficient condition for $H$ to have  the \ltgc~for
$\mathbb K$-relations.
 \begin{itemize}
\item If $m$ and $n$ are positive integers,  we say that a positive commutative monoid~$\mathbb K$ has the $m\times n$ \emph{\ftp}~if
 for
 every
 column~$m$-vector~$b = (b_1,\ldots,b_m) \in K^m$ with entries in~$K$ 
 and
 every row~$n$-vector~$c = (c_1,\ldots,c_n) \in K^n$ 
 with entries in~$K$ 
 such that~$b_1 + \cdots + b_m = c_1 + \cdots + c_n$ holds,  there
is  an~$m \times n$
matrix~$D = (d_{ij} : i \in [m], j \in [n]) \in K^{m \times n}$ with
entries in~$K$ whose 
rows sum to $b$ and whose columns sum to~$c$,
 i.e.,~$d_{i1} + \cdots + d_{im} = b_i$ for all~$i \in [m]$
 and~$d_{1j} + \cdots + d_{mj} = c_j$ for all~$j \in [n]$.
\item 
We say that a positive commutative monoid~$\mathbb K$ has the \emph{\ftp} if $\mathbb K$ has the $m\times n$ \ftp~for every pair $(m,n)$ of positive integers.
\end{itemize}
As shown in \cite{AK25}, examples of monoids with the transportation property include the Boolean monoid ${\mathbb B}=(\{0,1\}, \vee, 0)$, the bag monoid
${\mathbb N}=N,+,0)$, the monoids
${\mathbb T}= (R\cup\{\infty\}, \min, \infty)$ and  ${\mathbb V}=([0,1], \max, 0)$, and the power set monoid ${\mathbb P}(A)= (\mathcal{P}(A), \cup,\emptyset)$, for every set $A$.  In contrast,  no numerical semigroup other than the bag monoid has the transportation property.

To state the extension of Theorem \ref{thm:BFMY} to $\mathbb K$-relations, we need some additional notions, which were introduced in \cite{DBLP:journals/sigmod/AtseriasK25}.

Let $\mathbb K$~be a positive commutative monoid, let $X_1,\ldots,X_m$ be a schema,
and let $E$ be a \cjoin~expression over $X_1,\ldots,X_m$.

\begin{itemize}
\item 
First, recall from Section \ref{sec:cons}, that  $E$~is monotone with respect to a consistency witness function $W$ on $\mathbb K$  and a collection  $R_1(X_1),\ldots,R_m(X_m)$~of $\mathbb K$-relations if for every sub-expression $E_1 \cj E_2$ of $E$, we have that the $\mathbb K$-relations $E_1(W,R_1,\ldots,R_m)$
and $E_2(W,R_1,\ldots,R_m)$ are consistent.
\item  We say that $E$ is \emph{monotone on $\mathbb K$} if there is a consistency witness function $W$ on $\mathbb K$ such that $E$ is monotone with respect to  $W$  and every collection $R_1(X_1), \ldots,R_m(X_m)$ of pairwise consistent $\mathbb K$-relations.
    \item  
     We say that $E$ is \emph{strongly monotone on $\mathbb K$} if for every consistency witness function $W$ on $\mathbb K$, we have that $E$ is monotone  with respect to $W$
     and every collection $R_1(X_1), \ldots,R_m(X_m)$ of pairwise consistent $\mathbb K$-relations.
     \end{itemize}
     
Finally, we define what it means for a schema to admit a monotone and a strongly monotone \cjoin~expression.
     \begin{itemize}
    \item A schema  $X_1,\ldots,X_m$ \emph{admits a monotone \cjoin~expression on $\mathbb K$} if there is a  \cjoin-expression $E$ over $X_1,\ldots,X_m$ such that  $E$ is monotone on $\mathbb K$ and every hyperedge $X_i$ occurs in $E$.
 \item A schema $X_1,\ldots,X_m$ \emph{admits a strongly monotone \cjoin~expression on $\mathbb K$} if there is a  \cjoin-expression $E$ over $X_1,\ldots,X_m$ such that $E$ is strongly monotone on $\mathbb K$
        and  every hyperedge $X_i$ occurs in $E$.
\end{itemize}

We can now state one of the main  results from \cite{AK25,DBLP:journals/sigmod/AtseriasK25}.
\begin{theorem} [\cite{AK25,DBLP:journals/sigmod/AtseriasK25}]\label{thm:BFMY-general} Let $\mathbb K$ be a positive commutative monoid that has the transportation property.  For every hypergraph $H$,  
 the following statements are equivalent:
\begin{enumerate} \itemsep=0pt
\item $H$ is acyclic.
\item $H$ has the local-to-global consistency property for $\mathbb K$-relations.
\item $H$ admits a monotone sequential \cjoin-expression on $\mathbb K$.
\item $H$ admits a strongly monotone sequential \cjoin~expression on $\mathbb K$.
\end{enumerate}
\end{theorem}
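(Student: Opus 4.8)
The plan is to prove the cycle of implications $(1)\Rightarrow(4)\Rightarrow(3)\Rightarrow(2)\Rightarrow(1)$, with essentially all of the work concentrated in $(1)\Rightarrow(4)$. Implication $(2)\Rightarrow(1)$ is exactly part~1 of Theorem~\ref{thm:acyc-necessity} and requires nothing new, not even the \ftp. For $(3)\Rightarrow(2)$: if $E$ is a monotone sequential \cjoin-expression over the hyperedges $X_1,\ldots,X_m$ of $H$, then by definition there is a consistency witness function $W$ on $\mathbb K$ with respect to which $E$ is monotone for every pairwise consistent collection $R_1(X_1),\ldots,R_m(X_m)$, and every $X_i$ occurs in $E$; Proposition~\ref{prop:monotone-general} then shows that $E(W,R_1,\ldots,R_m)$ is a global consistency witness for every such collection, so $H$ has the \ltgc~for $\mathbb K$-relations. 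For $(4)\Rightarrow(3)$: since $\mathbb K$ admits at least one consistency witness function (choose an arbitrary witness for each consistent pair and, say, the empty $\mathbb K$-relation for each inconsistent pair), a sequential \cjoin-expression that is strongly monotone on $\mathbb K$ is in particular monotone on $\mathbb K$.

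For $(1)\Rightarrow(4)$ I would build the expression directly from the running intersection property. By Proposition~\ref{pro:acyclicity-structural}, acyclicity of $H$ provides an ordering $Y_1,\ldots,Y_m$ of its hyperedges such that for every $i\ge 2$ there is a $j<i$ with $(Y_1\cup\cdots\cup Y_{i-1})\cap Y_i\subseteq Y_j$. Take $E$ to be the left-deep sequential \cjoin-expression $(\cdots((Y_1\cj Y_2)\cj Y_3)\cdots\cj Y_m)$, in which every hyperedge occurs; it remains to show that $E$ is strongly monotone on $\mathbb K$. So fix an arbitrary consistency witness function $W$ on $\mathbb K$ and an arbitrary pairwise consistent collection $R_1,\ldots,R_m$ of $\mathbb K$-relations, with $R_i$ over $Y_i$. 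Writing $E_i$ for the prefix $(\cdots(Y_1\cj Y_2)\cdots\cj Y_i)$ and $Z_i:=Y_1\cup\cdots\cup Y_i$, I would prove by induction on $i$ that $E_i$ is monotone with respect to $W$ and $R_1,\ldots,R_m$ and that $E_i(W,R_1,\ldots,R_m)$ is a global consistency witness for $R_1,\ldots,R_i$. The base case $i=1$ is trivial. For the inductive step, put $P:=E_i(W,R_1,\ldots,R_m)$ and $U:=Z_i\cap Y_{i+1}$, and pick $j\le i$ with $U\subseteq Y_j$. Since $U\subseteq Y_j\subseteq Z_i$ and $P[Y_j]=R_j$ by the induction hypothesis, Proposition~\ref{lem:easyfacts1} gives $P[U]=P[Y_j][U]=R_j[U]$; and since $R_j$ and $R_{i+1}$ are consistent they agree on $Y_j\cap Y_{i+1}$, which together with $U\subseteq Y_j\cap Y_{i+1}$ and Proposition~\ref{lem:easyfacts1} gives $R_j[U]=R_{i+1}[U]$. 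Hence $P$ and $R_{i+1}$ agree on $U$, their common set of attributes.

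The crucial step, and the one I expect to be the main obstacle, is the following use of the \ftp: if $\mathbb K$ has it, then any two $\mathbb K$-relations $P(Z)$ and $R(Y')$ with $P[Z\cap Y']=R[Z\cap Y']$ are consistent. To prove this I would, for each $(Z\cap Y')$-tuple $u$, enumerate the finitely many tuples of $\supp(P)$ and the finitely many tuples of $\supp(R)$ that project to $u$; by definition of the marginal their $P$-values and their $R$-values each sum to the common quantity $P[Z\cap Y'](u)=R[Z\cap Y'](u)$, so the \ftp~(in the appropriate finite dimensions) yields a matrix of $\mathbb K$-values with precisely these row sums and column sums, and combining these matrices over all $u$ (with value $0$ on tuples not of this form) produces a $\mathbb K$-relation over $ZY'$ whose marginals on $Z$ and on $Y'$ are $P$ and $R$; this fact is essentially the two-hyperedge instance of the sufficiency of the \ftp~for the \ltgc~established in \cite{AK25} and could alternatively be cited from there. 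Applying it to $P$ and $R_{i+1}$ shows these are consistent, so $W(P,R_{i+1})$ is a consistency witness for them, which is exactly the monotonicity of $E_{i+1}$ at its outermost \cjoin; and since $E_{i+1}(W,R_1,\ldots,R_m)=W(P,R_{i+1})$, one more application of Proposition~\ref{lem:easyfacts1} shows its marginal on each $Y_\ell$ with $\ell\le i$ equals $R_\ell$ and its marginal on $Y_{i+1}$ equals $R_{i+1}$, completing the induction. Because $W$ and the collection were arbitrary, $E$ is strongly monotone on $\mathbb K$, which establishes $(1)\Rightarrow(4)$ and hence the theorem.
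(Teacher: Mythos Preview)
The paper does not prove this theorem; it is quoted as a result from \cite{AK25,DBLP:journals/sigmod/AtseriasK25}, so there is no in-paper proof to compare against. Your argument is correct and is the natural one: the cycle $(1)\Rightarrow(4)\Rightarrow(3)\Rightarrow(2)\Rightarrow(1)$, with the only substantive step $(1)\Rightarrow(4)$ handled by taking the left-deep sequential \cjoin-expression along a running-intersection ordering (Proposition~\ref{pro:acyclicity-structural}) and verifying inductively that each prefix is inner consistent with the next relation, hence consistent. One small streamlining: your ``crucial step'' (inner consistency implies consistency under the \ftp) is exactly the \icp, whose equivalence with the \ftp\ is already recorded in Theorem~\ref{thm:TP}, so you may simply cite that rather than redoing the matrix construction.
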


Theorem \ref{thm:BFMY-general} yields Theorem \ref{thm:BFMY} of Beeri et al.\ \cite{BeeriFaginMaierYannakakis1983} by taking $\mathbb K$ to be the Boolean monoid $\mathbb B$. In fact, Theorem \ref{thm:BFMY-general} yields something stronger:  the standard join $\Join$ can be replaced in Theorem \ref{thm:BFMY} by an arbitrary consistency witness function for standard relations. In effect, this means that the \emph{only} property of the standard join needed in Theorem \ref{thm:BFMY} is that the standard join is a consistency witness function for standard relations.

As shown in \cite{AK25}, the transportation property actually characterizes the positive commutative monoids $\mathbb K$ for which every acyclic hypergraph has the \ltgc~for~$\mathbb K$-relations.
Furthermore, the transportation property turns out to be  equivalent to the \emph{inner consistency} property, which is defined as follows in \cite{AK25}.
\begin{itemize}
    \item 
    Two $\mathbb K$-relations $R(X)$ and $S(Y)$ are \emph{inner consistent} if
$R[X\cap Y]=S[X \cap Y]$. 

\item We say that $\mathbb K$ has the \emph{inner consistency property} if whenever two $\mathbb K$-relations are inner consistent, then they  are also consistent.
\end{itemize}
Note that, using Proposition \ref{lem:easyfacts1}, it is easy to verify that if $R$ and $S$ are consistent $\mathbb K$-relations, then they are also inner consistent. 
Consequently, for  monoids with the inner consistency property, the notions of consistency and inner consistency coincide.

\begin{theorem} [\cite{AK25, DBLP:journals/sigmod/AtseriasK25}]
\label{thm:TP}
 Let $\mathbb K$ be a  positive commutative monoid. Then the following statements are equivalent:
\begin{enumerate} 
\item $\mathbb K$ has the \ftp.

\item $\mathbb K$ has the inner consistency property.

\item Every acyclic hypergraph has the \ltgc~for $\mathbb K$-relations.

\item The $3$-path hypergraph $P_3$ has the \ltgc~for $\mathbb K$-relations.

\item Every acyclic hypergraph admits
a  monotone sequential  \cjoin-expression on $\mathbb K$
\item Every acyclic hypergraph admits
a strongly monotone sequential \cjoin-expression on $\mathbb K$.

\end{enumerate}

\end{theorem}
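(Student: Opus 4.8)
The plan is to prove all six statements equivalent via the cycle
\[
(4)\Rightarrow(1)\Rightarrow(2)\Rightarrow(6)\Rightarrow(5)\Rightarrow(3)\Rightarrow(4),
\]
together with the short reverse implication $(2)\Rightarrow(1)$ that rounds out the purely algebraic equivalence $(1)\Leftrightarrow(2)$. Three arrows of the cycle are formal: $(6)\Rightarrow(5)$ because strong monotonicity is by definition stronger than monotonicity; $(5)\Rightarrow(3)$ because, by Proposition~\ref{prop:monotone-general}, a monotone sequential \cjoin-expression in which every hyperedge occurs turns every pairwise consistent collection into a global consistency witness; and $(3)\Rightarrow(4)$ because $P_3$ is acyclic. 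So the substance lies in $(1)\Leftrightarrow(2)$, in $(1)\Rightarrow(2)\Rightarrow(6)$ (the ``positive'' direction, which could alternatively be read off from Theorem~\ref{thm:BFMY-general} once $(1)$ is in hand, but which I would prove directly to expose where the \icp~is used), and --- above all --- in $(4)\Rightarrow(1)$.

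For $(1)\Leftrightarrow(2)$ I would give two dictionary translations between transportation matrices and consistency witnesses. For $(1)\Rightarrow(2)$: given inner consistent $R(X)$ and $S(Y)$ with common marginal $\mu=R[X\cap Y]=S[X\cap Y]$, define $T(XY)$ by working over each $u\in\supp(\mu)$ separately --- the $X$-tuples above $u$ carry values summing to $\mu(u)$, and likewise the $Y$-tuples above $u$, so the \ftp~returns a matrix redistributing those values; assigning to each joined tuple the corresponding matrix entry (and $0$ elsewhere) and summing out, with the help of Proposition~\ref{lem:easyfacts1}, yields $T[X]=R$ and $T[Y]=S$. For $(2)\Rightarrow(1)$: encode a transportation instance $b\in K^m$, $c\in K^n$ with $\sum_i b_i=\sum_j c_j$ by the $\mathbb K$-relations $R(A,B)$ with $R(i,0)=b_i$ and $S(B,C)$ with $S(0,j)=c_j$ over a schema whose only shared attribute $B$ is constant; these are inner consistent, so $(2)$ supplies a witness $T(A,B,C)$, and $D_{ij}:=T(i,0,j)$ has row sums $b$ and column sums $c$.

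For $(2)\Rightarrow(6)$: let $H$ be acyclic and, by Proposition~\ref{pro:acyclicity-structural}, fix an ordering $X_1,\dots,X_m$ of its hyperedges satisfying the running intersection property; let $E$ be the sequential \cjoin-expression $((\cdots(X_1\cj X_2)\cj\cdots)\cj X_m)$, in which every hyperedge occurs. Fix an arbitrary consistency witness function $W$ and an arbitrary pairwise consistent collection $R_1,\dots,R_m$; I would show by induction on $i$ that the prefix $E_i$ evaluates under $W$ to a global consistency witness $T_i$ for $R_1,\dots,R_i$. The inductive step is exactly where the \icp~enters: writing $Z=(X_1\cup\cdots\cup X_i)\cap X_{i+1}$ and choosing $j\le i$ with $Z\subseteq X_j$ (running intersection property), we get, using Proposition~\ref{lem:easyfacts1} and the fact that pairwise consistent $\mathbb K$-relations are inner consistent,
\[
T_i[Z]=T_i[X_j][Z]=R_j[Z]=R_{i+1}[Z];
\]
hence $T_i$ and $R_{i+1}$ are inner consistent, so consistent by $(2)$, so the subexpression $E_i\cj X_{i+1}$ has consistent arguments and Proposition~\ref{prop:monotone-general} extends the induction to $T_{i+1}$. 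Since $W$ and the collection were arbitrary, $E$ is strongly monotone on $\mathbb K$, so $H$ admits a strongly monotone sequential \cjoin-expression; and the case $i=m$ of the induction already exhibits the collection as globally consistent, so $(2)\Rightarrow(3)$ drops out at the same time.

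The main obstacle is $(4)\Rightarrow(1)$: recovering the full \ftp~from the bare hypothesis that $P_3$ has the \ltgc~for $\mathbb K$-relations. I would argue the contrapositive. Suppose $\mathbb K$ fails the $m\times n$ \ftp~for some $m,n$ --- equivalently, by $(1)\Leftrightarrow(2)$, that some inner consistent pair of $\mathbb K$-relations is not consistent --- and construct a pairwise consistent collection $R_1(A_1A_2),R_2(A_2A_3),R_3(A_3A_4)$ over $P_3$ that is \emph{not} globally consistent. The subtlety, and the reason this step is not routine, is that the two extreme hyperedges $\{A_1,A_2\}$ and $\{A_3,A_4\}$ of $P_3$ are disjoint: one cannot simply plant the two sides of a bad transportation instance into $R_1$ and $R_3$, because that would already make $R_1$ and $R_3$ inconsistent and so spoil pairwise consistency. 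Instead the bad instance must be encoded so that each of $(R_1,R_2)$, $(R_2,R_3)$, $(R_1,R_3)$ has a witness \emph{on its own} --- in particular $R_2$ has to couple correctly with $R_1$ along $A_2$ and with $R_3$ along $A_3$ --- while no single $\mathbb K$-relation over $A_1A_2A_3A_4$ realizes the three prescribed marginals simultaneously; equivalently, no witness for $(R_1,R_2)$ and witness for $(R_2,R_3)$ can be chosen to be mutually consistent. Engineering this three-way incompatibility out of a two-way transportation failure is the technical heart of the proof and the step I expect to be hardest.
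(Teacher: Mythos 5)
The paper itself does not prove Theorem~\ref{thm:TP}; it imports it from \cite{AK25,DBLP:journals/sigmod/AtseriasK25}, so your proposal can only be judged on its own terms. Most of it holds up: the formal arrows $(6)\Rightarrow(5)\Rightarrow(3)\Rightarrow(4)$ are correct, the two dictionary translations for $(1)\Leftrightarrow(2)$ are the standard ones and work (for $(2)\Rightarrow(1)$ you should fix $\domain(A)=[m]$ and $\domain(C)=[n]$, or observe that positivity kills all entries of the witness outside the intended $m\times n$ block), and your running-intersection induction for $(2)\Rightarrow(6)$ correctly isolates the single place where the \icp~is invoked. The problem is that $(4)\Rightarrow(1)$ --- which you yourself identify as the heart of the theorem --- is not proved: you describe the obstacle (the outer hyperedges $\{A_1,A_2\}$ and $\{A_3,A_4\}$ of $P_3$ are disjoint, so planting $b$ and $c$ on them makes $R_1,R_3$ inconsistent, while planting them on the inner attributes $A_2,A_3$ makes $R_2$ itself a nonexistent transportation matrix) but you do not exhibit the construction, and without it the cycle of implications is broken. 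This is a genuine gap, not a routine verification: as your own analysis shows, every naive one-branch encoding fails for one of the three pairs.

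The gap is fillable, and since you got as far as diagnosing the difficulty, here is the missing device: use \emph{two} branches through the middle. Given $b\in K^m$, $c\in K^n$ with $b_1+\cdots+b_m=c_1+\cdots+c_n=s$ and no transportation matrix, take $\domain(A_2)=\{u_0,u_1\}$, $\domain(A_3)=\{v_0,v_1\}$, and set $R_2(u_0,v_0)=R_2(u_1,v_1)=s$ with $R_2=0$ elsewhere; set $R_1(i,u_0)=b_i$ for $i\in[m]$ and $R_1(j',u_1)=c_j$ for $j\in[n]$ (with $\domain(A_1)$ the disjoint union of $[m]$ and a copy $[n]'$ of $[n]$); set $R_3(v_0,j)=c_j$ for $j\in[n]$ and $R_3(v_1,\star)=s$ for a fresh value $\star$ of $A_4$. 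Then $R_1,R_2$ and $R_2,R_3$ are consistent with \emph{unique} witnesses (the two branches of $R_2$ are diagonal, so positivity forces everything), $R_1,R_3$ are consistent because the copy of $c$ hidden in the $u_1$-branch of $R_1$ can be matched tuple-by-tuple against the $v_0$-branch of $R_3$ while the entire $b$-part of $R_1$ is dumped onto the single tuple $(v_1,\star)$ of value $s$; but any global witness, restricted to the fiber $(u_0,v_0)$ of $R_2$, would produce exactly an $m\times n$ transportation matrix from $b$ to $c$, which does not exist. Until an argument of this kind is supplied, your proof of the theorem is incomplete.
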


\section{Beta Acyclic Hypergraphs} \label{sec:beta}
Their good structural and semantic properties notwithstanding, acyclic hypergraphs suffer from the drawback that acyclicity is not a hereditary property, that is to say, a sub-hypergraph of an acyclic hypergraph need not be acyclic. For example, the hypergraph
$\{A,B,C\}, \{A,B\}, \{B,C\}, \{C,A\}$~is acyclic but it contains as a sub-hypergraph the triangle hypergraph $\{A,B\}$, $\{B,C\}$, $\{C,A\}$, which is cyclic. 

Motivated by the preceding considerations, Fagin \cite{DBLP:journals/jacm/Fagin83} introduced the following notion.
\begin{itemize}
    \item A hypergraph $H$ is \emph{$\beta$-acyclic} if every sub-hypergraph of $H$ is acyclic; otherwise, $H$ is \emph{$\beta$-cyclic}.
    \end{itemize}
    For example,  the hypergraph $H^*$ with hyperedges $\{A,B,C\}, \{A,B\},  \{A,C\}$ is $\beta$-acyclic.
There is a polynomial-time algorithm for testing if a hypergraph is $\beta$-acyclic
(see \cite{DBLP:journals/jacm/Fagin83}).
    
Fagin \cite{DBLP:journals/jacm/Fagin83}  established several different characterizations of $\beta$-acyclicity, including some that involve the absence of cycles of certain types. We spell out one of these, since we will refer to it later on.
\begin{itemize}
\item A \emph{weak $\beta$-cycle} in a hypergraph $H$ is a sequence $Y_1,A_1,Y_2,A_2,\ldots,Y_k,A_k,Y_{k+1}$ such that $k\geq 3$ and the following properties hold:
\begin{enumerate}
\item $Y_1,\ldots,Y_k$ are distinct hyperedges of $H$ and $Y_{k+1}=Y_1$;
\item $A_1,\ldots,A_k$ are distinct nodes of $H$;
\item For $1\leq i\leq k$, the node $A_i$ is in  $Y_i \cap Y_{i+1}$.
\item For $1\leq i\leq k$,  the node $A_i$ is not in any hyperedge $Y_j$ other than $Y_i$ 
and $Y_{i+1}$.
\end{enumerate}
\end{itemize}
\begin{theorem} [\cite{DBLP:journals/jacm/Fagin83}] \label{thm:weak-beta}
For every hypergraph $H$, the following statements are equivalent:
\begin{enumerate}
\item $H$ is $\beta$-acyclic.
\item $H$ has no weak $\beta$-cycles.
\end{enumerate}
\end{theorem}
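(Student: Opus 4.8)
The plan is to establish the two implications of the equivalence separately, in both cases by contraposition, using Proposition~\ref{pro:acyclicity-structural} as the engine: it lets one replace the assertion that a hypergraph $\mathcal C$ is cyclic by the assertion that $\mathcal C$ fails to be conformal \emph{or} fails to be chordal, and conversely. For the direction that starts from a weak $\beta$-cycle, the one preliminary remark needed is that conditions~(3) and~(4) of a weak $\beta$-cycle $Y_1,A_1,\ldots,Y_k,A_k,Y_{k+1}$ together pin down the incidence pattern completely: for $j \in \{1,\ldots,k\}$ one has $A_i \in Y_j$ if and only if $j \in \{i,i+1\}$, with indices taken modulo $k$. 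It is worth recording here that condition~(4) must be read \emph{locally}, that is, over the hyperedges $Y_1,\ldots,Y_k$ only and not over all hyperedges of $H$; otherwise the equivalence would already fail on the $\beta$-cyclic hypergraph $\{A,B,C\},\{A,B\},\{B,C\},\{C,A\}$.

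For the implication ``$\beta$-acyclic $\Rightarrow$ no weak $\beta$-cycle'' I would argue the contrapositive. Given a weak $\beta$-cycle $Y_1,A_1,\ldots,Y_k,A_k,Y_1$ with $k \geq 3$, let $\mathcal C$ be the sub-hypergraph of $H$ whose hyperedges are the distinct sets $Y_1,\ldots,Y_k$. By the incidence pattern, two of the nodes $A_1,\ldots,A_k$ are adjacent in the Gaifman graph $G(\mathcal C)$ exactly when their indices are consecutive modulo $k$, so these $k$ nodes induce precisely the cycle $A_1 A_2 \cdots A_k A_1$ in $G(\mathcal C)$, and this cycle has no chord. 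If $k \geq 4$ this exhibits a chordless cycle of length at least $4$, so $\mathcal C$ is not chordal; if $k = 3$ the triangle $\{A_1,A_2,A_3\}$ is a clique of $G(\mathcal C)$ contained in none of $Y_1,Y_2,Y_3$, so $\mathcal C$ is not conformal. In either case Proposition~\ref{pro:acyclicity-structural} makes $\mathcal C$ cyclic, hence $H$ is not $\beta$-acyclic; this is exactly the $4$-cycle and triangle phenomenon recorded in Section~\ref{sec:acyclic}.

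For the implication ``no weak $\beta$-cycle $\Rightarrow$ $\beta$-acyclic'' I would again argue the contrapositive. Assume $H$ is $\beta$-cyclic and fix a cyclic sub-hypergraph $\mathcal C$ of $H$ whose hyperedges are hyperedges of $H$ (the variant in which a sub-hypergraph may also restrict the node set is handled at the end). By Proposition~\ref{pro:acyclicity-structural}, either $G(\mathcal C)$ is not chordal or $\mathcal C$ is not conformal. If $G(\mathcal C)$ is not chordal, fix a chordless cycle $v_1 v_2 \cdots v_\ell v_1$ in $G(\mathcal C)$ with $\ell \geq 4$, and for each $i$ choose a hyperedge $Y_i$ of $\mathcal C$ with $\{v_i,v_{i+1}\} \subseteq Y_i$; then $Y_1,v_2,Y_2,v_3,\ldots,Y_\ell,v_1,Y_1$ is a weak $\beta$-cycle. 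Conditions~(2) and~(3) are immediate, and both the distinctness of the $Y_i$ (condition~(1)) and condition~(4) follow from one observation: a hyperedge of $\mathcal C$ cannot contain a node $v_{i+1}$ together with the two endpoints $v_j,v_{j+1}$ of a cycle edge not incident to $v_{i+1}$, since then $v_{i+1}$ would be Gaifman-adjacent to a non-neighbour of it on the cycle, producing a chord (this uses $\ell \geq 4$). If instead $\mathcal C$ is not conformal, fix a non-conformal clique $\{w_1,\ldots,w_p\}$ of $G(\mathcal C)$ of \emph{minimum size}; then $p \geq 3$, since any $2$-element clique of $G(\mathcal C)$ is contained in some hyperedge of $\mathcal C$, and by minimality every $(p-1)$-element subclique of $\{w_1,\ldots,w_p\}$ is contained in some hyperedge of $\mathcal C$. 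For $i = 1,2,3$ pick a hyperedge $Y_i$ of $\mathcal C$ with $\{w_1,\ldots,w_p\}\setminus\{w_i\} \subseteq Y_i$; then $w_i \notin Y_i$, for otherwise $\{w_1,\ldots,w_p\} \subseteq Y_i$ and the clique would be conformal. A routine check then shows that $Y_1,w_3,Y_2,w_1,Y_3,w_2,Y_1$ is a weak $\beta$-cycle of length $3$: its hyperedges are pairwise distinct because $w_j \in Y_i$ but $w_j \notin Y_j$ for distinct $i,j \in \{1,2,3\}$, and conditions~(3) and~(4) are the memberships just recorded. In either case $H$ has a weak $\beta$-cycle.

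The verifications of conditions~(1)--(4) in the three constructions are routine bookkeeping; the step I expect to be the main obstacle is the non-conformal case of the hard direction, where the right move — passing to a \emph{minimum-size} non-conformal clique and then exploiting the hyperedges that contain its $(p-1)$-element subcliques — is not the first thing one tries. A minor loose end is what ``sub-hypergraph'' is allowed to mean: if it may also restrict the node set, then a hyperedge of $\mathcal C$ is merely a trace $f \cap V'$ of a hyperedge $f$ of $H$, in which case one lifts each chosen $Y_i$ to the corresponding $f$; since every node appearing in the constructed weak $\beta$-cycle lies in $V'$, conditions~(3) and~(4) survive the lift.
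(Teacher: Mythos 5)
The paper does not prove this theorem: it is imported verbatim from Fagin's 1983 paper, so there is no in-paper argument to compare against. Your proposal is a correct, self-contained derivation from Proposition~\ref{pro:acyclicity-structural}, and the two key moves are the right ones: a weak $\beta$-cycle of length $k\ge 4$ is exactly a chordless cycle $A_1\cdots A_k$ in the Gaifman graph of the edge set $\{Y_1,\ldots,Y_k\}$ (and for $k=3$ a non-conformal triangle), while conversely a chordless cycle yields a weak $\beta$-cycle by covering its edges, and a \emph{minimum-size} non-conformal clique yields one of length $3$ via the hyperedges covering its $(p-1)$-subsets. I verified the incidence bookkeeping in all three constructions and it goes through; your reading of condition~(4) as local to the cycle's own hyperedges is the intended one, and your lifting remark correctly handles the node-restricted notion of sub-hypergraph. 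Two small points worth tightening if this were written out: in the chordless-cycle direction, the distinctness of the $Y_i$ for $j=i\pm 1$ is not literally an instance of your ``one observation'' about $v_{i+1}$ (there you must apply the symmetric statement to $v_i$ or $v_{i+2}$, since the edge $v_jv_{j+1}$ is then incident to $v_{i+1}$); and in the forward direction one should note that the hyperedges of a weak $\beta$-cycle are automatically pairwise incomparable (each $Y_i$ contains $A_{i-1},A_i$, which no other $Y_j$ both contains), so no reduction issue arises when invoking Proposition~\ref{pro:acyclicity-structural}. Neither affects correctness.
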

For example, 
consider the schema $\{A,B,C\}, \{C,D,E\}, \{E,F,A\}, \{A,C,E\}$, which was shown earlier to be acyclic. The sequence 
$\{A,B,C\}, C, \{C,D,E\}, E, \{E,F,A\}, A,\{A,B,C\}$ is a weak $\beta$-cycle, hence this schema is $\beta$-cyclic.

Because of the hereditary nature of $\beta$-acyclicity, the characterizations of acyclicity in Section \ref{sec:acyclic} give rise to characterizations of $\beta$-acyclicity. Thus, the following result holds.

\begin{corollary}
Assume that $\mathbb K$ is a positive commutative monoid that has the transportation property.  For every hypergraph $H$,  
 the following statements are equivalent:
\begin{enumerate} \itemsep=0pt
\item $H$ is $\beta$-acyclic.
\item Every sub-hypergraph of $H$~has the local-to-global consistency property for $\mathbb K$-relations.
\item Every sub-hypergraph of $H$~admits a monotone sequential \cjoin-expression on $\mathbb K$.
\item Every sub-hypergraph of $H$~admits a strongly monotone sequential \cjoin~expression on $\mathbb K$.
\end{enumerate}
\end{corollary}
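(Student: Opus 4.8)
The plan is to obtain this corollary as an immediate consequence of Theorem~\ref{thm:BFMY-general} combined with the very definition of $\beta$-acyclicity. Recall that $H$ is $\beta$-acyclic precisely when every sub-hypergraph of $H$ is acyclic, where a sub-hypergraph of $H=(V,F)$ is taken in the sense of Fagin~\cite{DBLP:journals/jacm/Fagin83} (obtained by selecting a subset of the hyperedges). The idea is therefore to apply the equivalence of Theorem~\ref{thm:BFMY-general} to each sub-hypergraph $H'$ of $H$ individually, and then to place the quantifier ``for every sub-hypergraph $H'$'' in front of the resulting chain of equivalences.

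Concretely, I would proceed as follows. Fix the monoid $\mathbb K$, which by hypothesis is a positive commutative monoid with the transportation property, so Theorem~\ref{thm:BFMY-general} is available. Let $H'$ be an arbitrary sub-hypergraph of $H$. Since $H'$ is a hypergraph in its own right and $\mathbb K$ has the transportation property, Theorem~\ref{thm:BFMY-general} yields that the following are equivalent: $H'$ is acyclic; $H'$ has the \ltgc~for $\mathbb K$-relations; $H'$ admits a monotone sequential \cjoin-expression on $\mathbb K$; $H'$ admits a strongly monotone sequential \cjoin-expression on $\mathbb K$. Quantifying universally over all sub-hypergraphs $H'$ of $H$, we conclude that the four statements ``every sub-hypergraph of $H$ is acyclic'', ``every sub-hypergraph of $H$ has the \ltgc~for $\mathbb K$-relations'', ``every sub-hypergraph of $H$ admits a monotone sequential \cjoin-expression on $\mathbb K$'', and ``every sub-hypergraph of $H$ admits a strongly monotone sequential \cjoin-expression on $\mathbb K$'' are pairwise equivalent. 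By definition the first of these says exactly that $H$ is $\beta$-acyclic, i.e., it is statement~1, and the remaining three are verbatim statements~2, 3, and 4 of the corollary. Hence statements~1--4 are pairwise equivalent, as desired.

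There is really no hard step here: the corollary carries no content beyond Theorem~\ref{thm:BFMY-general}, and the only thing that requires care is the bookkeeping of the quantifier over sub-hypergraphs together with a fixed convention for what a sub-hypergraph is, used uniformly across all four statements. One could additionally remark that, since $H$ is one of its own sub-hypergraphs, each of statements~2--4 in particular implies the corresponding property for $H$ itself; but this is not needed, as the argument only uses the ``for every sub-hypergraph'' direction in each statement.
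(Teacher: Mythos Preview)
Your proposal is correct and follows exactly the approach the paper intends: the corollary is stated immediately after the remark that ``the hereditary nature of $\beta$-acyclicity'' makes the characterizations of acyclicity in Section~\ref{sec:acyclic} yield characterizations of $\beta$-acyclicity, with no further proof given. Your argument---apply Theorem~\ref{thm:BFMY-general} to each sub-hypergraph and then universally quantify---is precisely the unwinding of that sentence.
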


\section{Gamma-Acyclic Hypergraphs} \label{sec:gamma}

Fagin \cite{DBLP:journals/jacm/Fagin83} introduced and studied $\gamma$-acyclic hypergraphs, which form a proper subclass of the class of $\beta$-acyclic hypergraphs. As with $\beta$-acyclic hypergraphs, there are several equivalent formulations of the notion 
of a $\gamma$-acyclic hypergraph in terms of absence of cycles of certain types, including the following one.
\begin{itemize}
\item A \emph{weak $\gamma$-cycle} in a hypergraph $H$ is a sequence $Y_1,A_1,Y_2,A_2,\ldots,Y_k,A_k,Y_{k+1}$ such that $k\geq 3$ and the following properties hold:
\begin{enumerate}
\item $Y_1,\ldots,Y_k$ are distinct hyperedges of $H$ and $Y_{k+1}=Y_1$;
\item $A_1,\ldots,A_k$ are distinct nodes of $H$;
\item For $1\leq i\leq k$, the node $A_i$ is in  $Y_i \cap Y_{i+1}$; 
\item For $i=1,2$, the node $A_i$ is not in any hyperedge  $Y_j$ other than $Y_i$ 
and $Y_{i+1}$.
\end{enumerate}
\item A hypergraph $H$ is \emph{$\gamma$-acyclic} if $H$ has no weak $\gamma$-cycle; otherwise, $H$ is \emph{$\gamma$-cyclic}.
\end{itemize}
Clearly, every sub-hypergraph of a $\gamma$-acyclic hypergraph is $\gamma$-acyclic as well. Observe that
the only difference between a weak $\beta$-cycle and a weak $\gamma$-cycle is in the fourth condition of the definitions of these notions: the requirement that the node $A_i$ belongs only to the hyperedges $Y_i$ and $Y_{i+1}$ holds for every $i\leq k$
in the case of a weak $\beta$-cycle, while it  holds for $i=1,2$ in the case of a weak $\gamma$-cycle. In particular, every weak $\beta$-cycle is also a weak $\gamma$-cycle; consequently, every $\gamma$-acyclic hypergraph is also $\beta$-acyclic. The converse, however, is not true. To see this, consider the hypergraph $H^*$ with hyperedges $\{A,B,C\}, \{A,B\},  \{A,C\}$, which, as pointed out in Section \ref{sec:beta}, is $\beta$-acyclic. Clearly, the sequence $\{A,B\}, B, \{A, B, C\}, C, \{A,C\}, A, \{A,B,C\}$ is a weak $\gamma$-cycle, hence this hypergraph is $\gamma$-cyclic.  It is also easy to see that for every $n\geq 2$, the $n$-path hypergraph $P_n$ with hyperedges
$\{A_1,A_2\},\ldots,\{A_n,A_{n+1}\}$ is $\gamma$-acyclic.
Note that there is a polynomial-time algorithm, due to D' Atri and Moscarini \cite{DBLP:conf/pods/DAtriM84},  for testing whether or not a hypergraph is $\gamma$-acyclic (see also \cite[Section 9.4]{DBLP:journals/jacm/Fagin83}). 

There are several different structural characterizations of $\gamma$-acyclic hypergraphs, including one which is due to
Brault-Baron \cite{DBLP:journals/csur/Brault-Baron16} and which involves
the hypergraph $H^*$ above. To describe this characterization, we need to introduce the following basic notion.
\begin{itemize}
\item Let $H=(V,F)$ be a hypergraph and let $S$ be a subset of the set $V$ of the nodes of $H$. The
\emph{induced hypergraph} $H[S]$ is the hypergraph with hyperedges $\{X\cap S: X \in E\}\setminus \{\emptyset\}$.
\end{itemize}
\begin{proposition} [\cite{DBLP:journals/csur/Brault-Baron16}]\label{prop:brault-baron}
For every hypergraph $H$, the following statements are equivalent:
\begin{enumerate}
\item $H$ is $\gamma$-acyclic.
\item $H$ is $\beta$-acyclic and 
there do not exist  three nodes
$A,B,C$ of $H$ such that  the hypergraph $H^*$ with hyperedges $\{A,B,C\}, \{A,B\},  \{A,C\}$  is a sub-hypergraph of $H[\{A,B,C\}]$.
\end{enumerate}
\end{proposition}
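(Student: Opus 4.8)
The plan is to prove the equivalent statement that $H$ is \emph{$\gamma$-cyclic} if and only if either $H$ is $\beta$-cyclic or there exist three distinct nodes $A,B,C$ of $H$ such that $H^*$ is a sub-hypergraph of $H[\{A,B,C\}]$. One half of the ``if'' direction is immediate: as noted just before the proposition, every $\gamma$-acyclic hypergraph is $\beta$-acyclic, so $\beta$-cyclicity of $H$ already forces $\gamma$-cyclicity. For the other half, suppose $A,B,C$ are distinct nodes with $H^*$ a sub-hypergraph of $H[\{A,B,C\}]$. Unwinding the definition of the induced hypergraph, there are hyperedges $Z_1,Z_2,Z_3$ of $H$ with $Z_1\cap\{A,B,C\}=\{A,B,C\}$, $Z_2\cap\{A,B,C\}=\{A,B\}$ and $Z_3\cap\{A,B,C\}=\{A,C\}$; since the three intersections are pairwise distinct, so are $Z_1,Z_2,Z_3$. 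I would then check that $Z_2,B,Z_1,C,Z_3,A,Z_2$ is a weak $\gamma$-cycle with $k=3$: its three hyperedges are distinct, its three nodes are distinct, $B\in Z_2\cap Z_1$, $C\in Z_1\cap Z_3$ and $A\in Z_3\cap Z_2$ all hold, and the two privacy conditions amount to $B\notin Z_3$ and $C\notin Z_2$, which follow from $Z_3\cap\{A,B,C\}=\{A,C\}$ and $Z_2\cap\{A,B,C\}=\{A,B\}$. Hence $H$ is $\gamma$-cyclic.

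For the ``only if'' direction I would argue the contrapositive: assume $H$ is $\gamma$-cyclic and $\beta$-acyclic, and exhibit nodes $A,B,C$ as above. Fix a weak $\gamma$-cycle $\sigma = Y_1,A_1,Y_2,A_2,\ldots,Y_k,A_k,Y_1$ of $H$ of \emph{minimum} length $k$. The main step is to prove $k=3$. Since $H$ is $\beta$-acyclic, $\sigma$ is not a weak $\beta$-cycle, and the fourth defining condition of a weak $\beta$-cycle coincides with that of a weak $\gamma$-cycle for $i\in\{1,2\}$; hence it fails for some $i\in\{3,\ldots,k\}$, i.e.\ there is $j\notin\{i,i+1\}$ (indices cyclic, $Y_{k+1}=Y_1$) with $A_i\in Y_j$, so that $A_i\in Y_i\cap Y_{i+1}\cap Y_j$.

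Suppose toward a contradiction that $k\ge 4$. Then $A_i$ acts as a chord that short-circuits $\sigma$ at the edge between $Y_i$ and $Y_{i+1}$ over to $Y_j$: following a suitable one of the two arcs of $\sigma$ delimited by that edge and by $Y_j$, and closing up with $A_i$, produces a closed walk of length in $\{3,\ldots,k-1\}$ whose hyperedges are a subset of those of $\sigma$ and whose articulation nodes are distinct. The only nontrivial point is to ensure this walk is again a weak $\gamma$-cycle, i.e.\ that it has two \emph{consecutive} articulation nodes private to their incident hyperedges. When $Y_j$ is not the ``middle'' hyperedge $Y_2$ of the block $Y_1,Y_2,Y_3$, one can choose the arc so that it contains all of $Y_1,Y_2,Y_3$; then $A_1$ (between $Y_1$ and $Y_2$) and $A_2$ (between $Y_2$ and $Y_3$) survive as consecutive private nodes, since privacy refers only to hyperedges of the now smaller walk, which are a subset of the old ones. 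The delicate case is $j=2$: the chord lands in the middle of the block and no short arc retains both $A_1$ and $A_2$. Here I would instead re-route \emph{through} $Y_2$ --- replacing $\sigma$ by the shorter closed walk $Y_1,A_1,Y_2,A_i,Y_{i+1},A_{i+1},\ldots,Y_k,A_k,Y_1$ (or a symmetric one when $i=k$) --- and argue that a suitable choice of chord, or of $\sigma$ among the minimum-length weak $\gamma$-cycles (e.g.\ one also minimizing the number of ``extra'' incidences $A_\ell\in Y_m$ with $m\notin\{\ell,\ell+1\}$), makes this walk again a weak $\gamma$-cycle. I expect this $j=2$ case --- showing the chord can nonetheless be turned into a strictly shorter weak $\gamma$-cycle --- to be the main obstacle; it reduces to a finite case analysis on the cyclic position of $j$ relative to $i,i+1$ and to the block $Y_1,Y_2,Y_3$.

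It remains to handle $k=3$; write $\sigma = Y_1,A_1,Y_2,A_2,Y_3,A_3,Y_1$. The only condition distinguishing a weak $\beta$-cycle of length $3$ from a weak $\gamma$-cycle of length $3$ is $A_3\notin Y_2$; since $H$ is $\beta$-acyclic, $\sigma$ is not a weak $\beta$-cycle, hence $A_3\in Y_2$. Set $A:=A_3$, $B:=A_1$, $C:=A_2$; these are distinct by the second condition on weak $\gamma$-cycles. Combining the privacy of $A_1$ (so $A_1\notin Y_3$) and of $A_2$ (so $A_2\notin Y_1$) with $A_1\in Y_1\cap Y_2$, $A_2\in Y_2\cap Y_3$, $A_3\in Y_1\cap Y_3$ and $A_3\in Y_2$, one gets $Y_2\cap\{A,B,C\}=\{A,B,C\}$, $Y_1\cap\{A,B,C\}=\{A,B\}$ and $Y_3\cap\{A,B,C\}=\{A,C\}$. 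Therefore $H^*$ is a sub-hypergraph of $H[\{A,B,C\}]$, which is what was needed, completing the plan.
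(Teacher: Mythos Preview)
The paper does not give its own proof of this proposition: it is quoted from Brault-Baron, and the paper only remarks that Brault-Baron takes condition~(2) as his \emph{definition} of $\gamma$-acyclicity and establishes the equivalence with Fagin's definition via the D'Atri--Moscarini elimination procedure. So your direct cycle-shortening argument is a genuinely different route from what the paper invokes.

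Your ``if'' direction is correct and complete, and your handling of the $k=3$ endgame in the ``only if'' direction is also correct. The gap you yourself flag at $j=2$ in the shortening step is real, and your proposed fix (a secondary minimization over ``extra incidences'') is vague. However, the gap can be closed cleanly without any secondary minimization, by choosing the chord rather than the cycle. If some chord $(i,j)$ with $i\geq 3$ has $j\neq 2$, use it and your arc argument goes through. Otherwise every chord has $j=2$; let $i^*\geq 3$ be least with $A_{i^*}\in Y_2$. If $i^*<k$, your shortcut $Y_1,A_1,Y_2,A_{i^*},Y_{i^*+1},\ldots,Y_k,A_k,Y_1$ \emph{is} a weak $\gamma$-cycle: $A_1$ is private as before, and $A_{i^*}$ is private to $\{Y_2,Y_{i^*+1}\}$ among the surviving hyperedges because any incidence $A_{i^*}\in Y_m$ with $m\in\{1\}\cup\{i^*+2,\ldots,k\}$ would be a chord with $j=m\neq 2$, contrary to assumption. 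If $i^*=k$ (so $(k,2)$ is the unique chord), use instead $Y_2,A_2,Y_3,\ldots,Y_k,A_k,Y_2$: here $A_2$ is private by the original condition, and $A_3$ is private to $\{Y_3,Y_4\}$ since $3\notin I$ forces $A_3\notin Y_2$ and any $A_3\in Y_m$ with $m\geq 5$ would again be a chord with $j\neq 2$. With this chord-selection argument in place of your secondary minimization, your plan becomes a complete proof.
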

In fact, Brault-Baron \cite{DBLP:journals/csur/Brault-Baron16} defines a $\gamma$-acyclic hypergraph to be a hypergraph that satisfies the second condition in Proposition \ref{prop:brault-baron}; he then  shows that this condition is equivalent to the D' Atri and Moscarini algorithm \cite{DBLP:conf/pods/DAtriM84} producing the empty hypergraph, hence this condition is equivalent to Fagin's \cite{DBLP:journals/jacm/Fagin83}
definition of $\gamma$-acyclicity.

Fagin \cite{DBLP:journals/jacm/Fagin83} established that $\gamma$-acyclic hypergraphs have certain desirable semantic properties; the main such property involves the notion of a \emph{connected} join expression.
\begin{itemize}
\item A join expression $E$  is \emph{connected} if for each of its sub-expressions $(E_1\Join E_2)$, there is an attribute that appears in both $E_1$ and $E_2$.
\end{itemize}
In particular, if $E$ is a sequential join expression $(((\cdots (Y_1\Join Y_2) \Join \cdots  )\Join Y_{k-1})\Join Y_k)$, then $E$ is connected if and only if  for every with $1\leq i \leq k$, we have that $(Y_1\cup \cdots \cup Y_{i-1})\cap Y_i \not = \emptyset$.
For example, the sequential join expression
$((\{A_1,A_2\}\Join \{A_2,A_3\})\Join \{A_3,A_4\})$ is connected, while the sequential join expression 
$(\{A_1,A_2\}\Join \{A_3,A_4\})$ is not connected.

We can now state the main semantic characterization of  $\gamma$-acyclicity, obtained in \cite{DBLP:journals/jacm/Fagin83}.
\begin{theorem}[\cite{DBLP:journals/jacm/Fagin83}] \label{thm:fagin-gamma}
For every hypergraph $H$, the following statements are equivalent:
\begin{enumerate}
\item $H$ is $\gamma$-acyclic.
\item Every connected sequential join expression over $H$ is monotone.
\end{enumerate}
\end{theorem}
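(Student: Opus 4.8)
I would prove the two implications separately, via an explicit construction for the contrapositive of $(2)\Rightarrow(1)$ and a combinatorial lemma for $(1)\Rightarrow(2)$.

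For the contrapositive of $(2)\Rightarrow(1)$, suppose $H$ is $\gamma$-cyclic and fix a weak $\gamma$-cycle $Y_1,A_1,\dots,Y_k,A_k,Y_1$ with $k\ge 3$; thus $A_1$ lies in no cycle hyperedge other than $Y_1,Y_2$ and $A_2$ in none other than $Y_2,Y_3$. Give every attribute the domain $\{0,1\}$ and define standard relations over $H$ by $R_Y:=\{t\in\tuples(Y):t(A_1)=t(A_2)\}$ whenever $\{A_1,A_2\}\subseteq Y$, and $R_Y:=\tuples(Y)$ (the full relation) otherwise. A short case analysis on which of $A_1,A_2$ the two hyperedges contain shows this collection is pairwise consistent: if a shadow $Y\cap Y'$ omits one of $A_1,A_2$ then both marginals on it are full, and if it contains both, then both marginals equal $\{u:u(A_1)=u(A_2)\}$. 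Consider the connected sequential join expression $E:=(((\cdots(Y_3\Join Y_4)\cdots)\Join Y_k)\Join Y_1)\Join Y_2$ over $H$ — it is connected because consecutive members of $Y_3,\dots,Y_k,Y_1,Y_2$ share the node $A_3,\dots,A_k,A_1$ in turn, and for $k=3$ it is simply $(Y_3\Join Y_1)\Join Y_2$. By privacy, among $Y_1,Y_3,\dots,Y_k$ the node $A_1$ appears only in $Y_1$ and $A_2$ only in $Y_3$, so none of these hyperedges contains $\{A_1,A_2\}$; hence each of them is interpreted by a full relation and, since the natural join of full relations is full, the sub-expression $E':=((\cdots(Y_3\Join Y_4)\cdots)\Join Y_1)$ evaluates to $\tuples(W)$ with $W:=Y_1\cup Y_3\cup\dots\cup Y_k$. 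But $A_1\in Y_1\cap Y_2$ and $A_2\in Y_3\cap Y_2$, so $\{A_1,A_2\}\subseteq W\cap Y_2$, and therefore $R_{Y_2}[W\cap Y_2]=\{u\in\tuples(W\cap Y_2):u(A_1)=u(A_2)\}$ is a proper subset of $\tuples(W\cap Y_2)=\tuples(W)[W\cap Y_2]$. Hence the two operands of the outermost join of $E$ are not consistent, so $E$ is not monotone and statement $(2)$ fails.

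For $(1)\Rightarrow(2)$, I would first isolate the lemma: \emph{if $H$ is $\gamma$-acyclic and $Y_{i_1},\dots,Y_{i_l}$ is a connected listing of hyperedges of $H$ — meaning $Y_{i_j}\cap(Y_{i_1}\cup\cdots\cup Y_{i_{j-1}})\neq\emptyset$ for $2\le j\le l$ — then for each $j$ there is $p<j$ with $(Y_{i_1}\cup\cdots\cup Y_{i_{j-1}})\cap Y_{i_j}\subseteq Y_{i_p}$.} Granting it, let $E=(((\cdots(Y_{i_1}\Join Y_{i_2})\cdots)\Join Y_{i_l}))$ be a connected sequential join expression and $R_1,\dots,R_m$ pairwise consistent standard relations over $H$; write $T_j$ for the value of the prefix $(Y_{i_1}\Join\cdots\Join Y_{i_j})$ and $Z_j:=Y_{i_1}\cup\cdots\cup Y_{i_j}$ for its attribute set. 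By induction on $j$ I would show that $T_j$ is a global consistency witness for $\{R_{i_p}:p\le j\}$: in the inductive step pick $p<j$ with $Z_{j-1}\cap Y_{i_j}\subseteq Y_{i_p}$ by the lemma; then, using the induction hypothesis, Proposition~\ref{lem:easyfacts1}, and the pairwise consistency of $R_{i_p},R_{i_j}$ (note $Z_{j-1}\cap Y_{i_j}\subseteq Y_{i_p}\cap Y_{i_j}$), one gets $T_{j-1}[Z_{j-1}\cap Y_{i_j}]=R_{i_p}[Z_{j-1}\cap Y_{i_j}]=R_{i_j}[Z_{j-1}\cap Y_{i_j}]$, so $T_{j-1}$ and $R_{i_j}$ agree on their common attributes and hence are consistent (for standard relations this is elementary; it is the inner consistency property of $\mathbb{B}$), whence $T_j=T_{j-1}\Join R_{i_j}$ is a consistency witness for them and, projecting down, a global consistency witness for $\{R_{i_p}:p\le j\}$. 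Since every join step in $E$ is then between consistent relations, $E$ is monotone. (Repeated hyperedges are harmless: if $Y_{i_j}=Y_{i_p}$ with $p<j$ then $Z_{j-1}\cap Y_{i_j}=Y_{i_j}\subseteq Y_{i_p}$, so one may assume the $Y_{i_j}$ distinct when proving the lemma.)

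To prove the lemma I would argue by contradiction, manufacturing a weak $\gamma$-cycle. Suppose $Z\cap Y$ lies in no $Y_{i_p}$ with $p<j$, where $Y:=Y_{i_j}$, $Z:=Y_{i_1}\cup\cdots\cup Y_{i_{j-1}}$ and $H'':=\{Y_{i_1},\dots,Y_{i_{j-1}}\}$. Since $\gamma$-acyclicity implies $\beta$-acyclicity, $H''$ is acyclic, hence conformal and, being connected, it has a join tree $T$ in which every edge carries a nonempty intersection and through which the hyperedges containing any fixed node form a subtree. If every pair of nodes of $Z\cap Y$ lay in a common hyperedge of $H''$, then $Z\cap Y$ would be a clique of $G(H'')$ and conformality would force $Z\cap Y\subseteq Y_{i_p}$ for some $p<j$, a contradiction; so fix $a,b\in Z\cap Y$ with no hyperedge of $H''$ containing both. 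Let $\mathcal{A}$ (resp.\ $\mathcal{B}$) be the subtree of $T$ consisting of the hyperedges containing $a$ (resp.\ $b$); then $\mathcal{A}\cap\mathcal{B}=\emptyset$. Take a shortest path $Z^{(0)},\dots,Z^{(m)}$ in $T$ with $Z^{(0)}\in\mathcal{A}$ and $Z^{(m)}\in\mathcal{B}$; then $m\ge 1$, $a\notin Z^{(1)},\dots,Z^{(m)}$, and $b\notin Z^{(0)},\dots,Z^{(m-1)}$. Now $Y,Z^{(0)},Z^{(1)},\dots,Z^{(m)}$ are $m+2\ge 3$ distinct hyperedges of $H$; reading off the closed walk through $Z^{(0)},Z^{(1)},\dots,Z^{(m)},Y$ and back to $Z^{(0)}$ — consecutive hyperedges meet, as $b\in Z^{(m)}\cap Y$, $a\in Y\cap Z^{(0)}$, and each tree edge of $T$ supplies a common node for the rest — with $Y$ in the role of the middle hyperedge $W_2$, $W_1=Z^{(0)}$, and $W_3=Z^{(m)}$, the displayed facts say precisely that $a$ belongs only to $W_1,W_2$ and $b$ only to $W_2,W_3$ among these hyperedges. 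After choosing pairwise distinct boundary nodes along the walk (and, if necessary, shortening it), this is a weak $\gamma$-cycle of $H$, a contradiction. The main obstacle is exactly this last point: the privacy conditions that distinguish a $\gamma$-cycle from a $\beta$-cycle — and which are what make the ``all relations full except the middle one'' construction succeed in the other direction — drop out for free from the minimality of the $T$-path, but guaranteeing that the boundary nodes can be taken pairwise distinct requires some care with the join-tree connectedness of the sets of hyperedges through a fixed node.
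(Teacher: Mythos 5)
Your proof is correct in its essentials, but it takes a genuinely different route from what this paper does: the paper does not reprove Fagin's theorem at all, it cites it and instead proves the $\mathbb K$-generalizations (Theorems~\ref{thm:gamma-necessity} and~\ref{thm:gamma-sufficient}), from which the Boolean case follows via Corollary~\ref{cor:gamma-characterize}. For the necessity direction your argument is the classical one and is strictly tied to the Boolean monoid and the standard join: you interpret every hyperedge containing both private nodes $A_1,A_2$ by the equality relation and everything else by the full relation, and exploit that a join of full relations is full. That computation is exactly what an arbitrary consistency witness function does \emph{not} guarantee, which is why the paper's Theorem~\ref{thm:gamma-necessity} must instead split into the $\beta$-cyclic case (imported from Theorem~\ref{thm:acyc-necessity}) and the $\beta$-acyclic case (Brault-Baron's $H^*$ pattern plus a case analysis on the two possible witnesses $S_1,S_2$); your construction buys a shorter, unified proof, but only of the classical statement. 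For the sufficiency direction your structure is close to the paper's Theorem~\ref{thm:gamma-sufficient}: both reduce to showing that in a connected listing the shadow $(Y_{i_1}\cup\cdots\cup Y_{i_{j-1}})\cap Y_{i_j}$ must sit inside a single earlier hyperedge, on pain of a weak $\gamma$-cycle, and both assemble that cycle from a shortest path together with $Y_{i_j}$ as the ``middle'' hyperedge carrying the two private nodes. The difference is how the two private nodes are located: you use conformality of the acyclic sub-hypergraph plus a join tree with the subtree property (so that $a$ and $b$ are automatically private at the two ends of the tree path), whereas the paper avoids join trees entirely and instead picks $X_k$ maximizing $|X_k\cap Y|$, takes $A_1\in Y\setminus X_k$, and extracts $A_2\in(X_k\cap Y)\setminus S_p$ from that maximality. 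The one loose end you correctly flag --- making the interior boundary nodes $B_n,\ldots,B_{p-1}$ of the manufactured cycle pairwise distinct --- is also the delicate point in the paper's proof; it is handled there by the minimality of the path (a repeated boundary node would let you shorten it), and the same repair closes your version, so it is a presentational gap rather than a mathematical one.
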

The notion of a connected join expression extends to the notion of a connected \cjoin~expression in a straightforward way.
\begin{itemize}
\item A \cjoin~expression $E$  is \emph{connected} if for each of its sub-expressions $(E_1\cj E_2)$, there is an attribute that appears in both $E_1$ and $E_2$.
\end{itemize}
In particular, if $E$ is a sequential \cjoin~expression $(((\cdots (Y_1\cj Y_2) \cj \cdots )\cj Y_{k-1} )\cj Y_k)$, then $E$ is connected if for every with $1\leq i \leq k$, we have that $(Y_1\cup \cdots \cup Y_{i-1})\cap Y_i \not = \emptyset$.

It is now natural to ask: does Theorem \ref{thm:fagin-gamma} extend and how does it extend to $\gamma$-acyclic hypergraphs, \cjoin~expressions, and $\mathbb K$-relations, where $\mathbb K$ is  a positive commutative  monoid? 

Let $\mathbb K$ be a positive commutative monoid and let $E$
be a \cjoin~expression.
Recall that $E$ is monotone on  $\mathbb K$ if there is a consistency witness function $W$ on $\mathbb K$ such that $E$ is monotone with respect to  $W$  and every collection $R_1(X_1), \ldots,R_m(X_m)$ of pairwise consistent $\mathbb K$-relations.
Recall further that
 $E$ is strongly monotone on $\mathbb K$ if $E$ is monotone  with respect to every consistency witness function $W$ on $\mathbb K$ and every collection $R_1(X_1), \ldots,R_m(X_m)$ of pairwise consistent $\mathbb K$-relations.
    
    In what follows in this section, we will establish the following results:
    \begin{enumerate}
\item If $H$ is a hypergraph 
such that every connected sequential \cjoin-expression over $H$ is monotone on $\mathbb K$, then $H$ is $\gamma$-acyclic.
\item If $\mathbb K$ has the transportation property and $H$ is $\gamma$-acyclic, then 
every connected sequential \cjoin-expression over $H$ is strongly monotone on $\mathbb K$,
    \end{enumerate}


\begin{theorem}\label{thm:gamma-necessity}
For all positive commutative monoids $\mathbb K$ and hypergraphs $H$, if $H$ is
such that every connected sequential \cjoin-expression over $H$ is monotone on $\mathbb K$, then $H$ is $\gamma$-acyclic.
\end{theorem}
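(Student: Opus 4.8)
The plan is to prove the contrapositive: assuming $H=(V,F)$ is $\gamma$-cyclic, I will exhibit one connected sequential $\cj$-expression $E$ over $H$ that is not monotone on $\mathbb{K}$ — that is, such that for \emph{every} consistency witness function $W$ on $\mathbb{K}$ there is a pairwise consistent collection of $\mathbb{K}$-relations with respect to which $E$ fails to be monotone. Since $\gamma$-cyclicity of $H$ means that $H$ has a weak $\gamma$-cycle $Y_1,A_1,Y_2,A_2,\ldots,Y_k,A_k,Y_{k+1}=Y_1$ with $k\ge 3$, I take $E$ to be the sequential join that runs around the cycle but \emph{defers $Y_2$ to the very end}:
\[
E \;=\; \Bigl(\bigl(\cdots\bigl((Y_3\cj Y_4)\cj Y_5\bigr)\cj\cdots\cj Y_k\bigr)\cj Y_1\Bigr)\cj Y_2 ,
\]
and I write $E'$ for its prefix $\bigl(\cdots\bigl((Y_3\cj Y_4)\cj\cdots\cj Y_k\bigr)\cj Y_1\bigr)$, so that $E=(E'\cj Y_2)$. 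Connectedness of $E$ is immediate from condition~(3) in the definition of a weak $\gamma$-cycle: each new step shares the node $A_{i-1}\in Y_{i-1}\cap Y_i$, the step appending $Y_1$ is covered by $A_k\in Y_k\cap Y_1$, and the final step appending $Y_2$ is covered by $A_1\in Y_1\cap Y_2$. The point of this ordering is condition~(4), which for a weak $\gamma$-cycle is only guaranteed at $A_1,A_2$: among the cycle hyperedges, $A_1$ lies only in $Y_1,Y_2$ and $A_2$ only in $Y_2,Y_3$. Hence the attribute set of $E'$, namely $Y_3\cup\cdots\cup Y_k\cup Y_1$, contains both $A_1$ and $A_2$, yet inside it $A_1$ occurs only in $Y_1$ and $A_2$ only in $Y_3$.

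Fix a nonzero $a\in K$ (it exists since $|K|\ge 2$), fix two values $0,1$ in each of $\domain(A_1)$ and $\domain(A_2)$, and fix a distinguished element $0_v\in\domain(v)$ for every node $v$. All $\mathbb{K}$-relations used below are supported on tuples whose coordinates outside $\{A_1,A_2\}$ take the distinguished values and whose $A_1,A_2$-coordinates lie in $\{0,1\}$, so each behaves as a $2\times 2$ matrix in the coordinates it sees and marginalizing out a non-$A_1,A_2$-coordinate never blows up a value. Let $R_1$ over $Y_1$ be constantly $a$ as a function of $A_1$, let $R_3$ over $Y_3$ be constantly $a$ as a function of $A_2$, and let $R_i$ over each $Y_i$, $i\in\{4,\ldots,k\}$, be the single value $a+a$; these are precisely the relations $E'$ reads, and they are fixed once and for all. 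Let $M_0,M_1$ be the two ``permutation'' relations over $Y_2$, defined as functions of $(A_1,A_2)$ by $M_0(x,y)=a$ iff $x=y$ and $M_1(x,y)=a$ iff $x\ne y$; both have all row sums and all column sums equal to $a$, and $M_0\ne M_1$ because $a\ne 0$. For $j\in\{0,1\}$, let $T^{*}_j$ be the relation over $V$ equal to $M_j$ on $(A_1,A_2)$ and supported on the distinguished values elsewhere; a direct computation using Proposition~\ref{lem:easyfacts1} gives $T^{*}_j[Y_1]=R_1$, $T^{*}_j[Y_3]=R_3$, $T^{*}_j[Y_i]=a+a=R_i$ for $i\ge 4$, and $T^{*}_j[Y_2]=M_j$. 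Consequently, for each $j$, setting the relation of $Y_2$ to $M_j$ and that of every remaining hyperedge $Z$ of $H$ to $T^{*}_j[Z]$ yields a \emph{globally} — hence pairwise — consistent collection.

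Now let $W$ be an arbitrary consistency witness function on $\mathbb{K}$ and let $P':=E'(W,\cdot)$ be the value of $E'$ on $R_1,R_3,R_4,\ldots,R_k$. If $E'$ is \emph{not} monotone with respect to $W$ and these relations, then its failing sub-expression also belongs to $E$ and has the same evaluation there, so $E$ is not monotone with respect to $W$ and the consistent collection obtained by taking $j=0$, and we are done. Otherwise $E'$ is monotone, and since every hyperedge it reads occurs in it, Proposition~\ref{prop:monotone-general} shows that $P'$ is a global consistency witness for $R_3,\ldots,R_k,R_1$; using $A_1\in Y_1$, $A_2\in Y_3$ and Proposition~\ref{lem:easyfacts1}, this forces $P'[\{A_1\}]=R_1[\{A_1\}]$ and $P'[\{A_2\}]=R_3[\{A_2\}]$, so the $2\times 2$ matrix $P'[\{A_1,A_2\}]$ has all row sums and all column sums equal to $a$. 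Since $M_0\ne M_1$, there is $j\in\{0,1\}$ with $M_j[\{A_1,A_2\}]\ne P'[\{A_1,A_2\}]$; take the consistent collection for that $j$. If $E=(E'\cj Y_2)$ were monotone with respect to $W$ and it, then $P'$ and $M_j$ would be consistent, hence inner consistent on $Y_2\cap(\text{attributes of }E')$; since $\{A_1,A_2\}$ is contained in that set, marginalizing down via Proposition~\ref{lem:easyfacts1} gives $P'[\{A_1,A_2\}]=M_j[\{A_1,A_2\}]$, a contradiction. In either case $E$ fails to be monotone with respect to $W$ and a pairwise consistent collection, and since $W$ was arbitrary, $E$ is not monotone on $\mathbb{K}$.

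I expect the main obstacle to be the \emph{design} of the construction rather than its verification. Two ideas are essential. First, one must run the sequential join around the $\gamma$-cycle with $Y_2$ placed last, so that the privacy of $A_1$ and $A_2$ — condition~(4), which for a weak $\gamma$-cycle holds only at the first two attributes — pins the $A_1$- and $A_2$-marginals of the prefix relation to those of $R_1$ and $R_3$ while leaving its joint $(A_1,A_2)$-behaviour undetermined. Second, and this is the point peculiar to the ``$\exists W$'' form of ``monotone on $\mathbb{K}$'', to defeat every witness function simultaneously one keeps two admissible choices $M_0,M_1$ for the last relation with identical one-dimensional margins, so that whatever two-dimensional relation $W$ produces at the penultimate stage must disagree with at least one of them. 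The rest — connectedness of $E$, global consistency of the collections, and the marginal identities — is routine given Propositions~\ref{lem:easyfacts1} and~\ref{prop:monotone-general}.
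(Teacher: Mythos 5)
Your proof is correct, and it takes a genuinely different route from the paper's. The paper argues the contrapositive in two cases: when $H$ is $\beta$-cyclic it invokes the earlier necessity result for $\alpha$-acyclicity (Theorem~\ref{thm:acyc-necessity}) to obtain a pairwise consistent but not globally consistent collection on a connected cyclic sub-hypergraph, and then observes that any connected sequential expression traversing that component cannot be monotone; when $H$ is $\beta$-acyclic but $\gamma$-cyclic it invokes Brault-Baron's characterization (Proposition~\ref{prop:brault-baron}) to locate the pattern $\{A,B\},\{A,C\},\{A,B,C\}$ and runs a three-relation counterexample on $((Y_1\cj Y_2)\cj Y_3)$. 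You instead work uniformly from the definition of a weak $\gamma$-cycle, deferring $Y_2$ to the last position of the sequential join so that condition~(4) --- which for a weak $\gamma$-cycle is guaranteed only at $A_1,A_2$ --- ensures the prefix sees $A_1$ only through $Y_1$ and $A_2$ only through $Y_3$; the two permutation relations $M_0,M_1$ with identical unary margins then defeat every consistency witness function at the final step, exactly as in the paper's two-witness trick but lifted to cycles of arbitrary length. Your single argument subsumes both of the paper's cases, needs neither Theorem~\ref{thm:acyc-necessity} nor Proposition~\ref{prop:brault-baron}, and for $k=3$ essentially specializes to the construction in the paper's Case~2 and Appendix. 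What the paper's route buys is modularity (its Case~1 is immediate given the $\alpha$-acyclicity machinery already in place); what yours buys is self-containedness and a transparent accounting of where each clause of the weak-$\gamma$-cycle definition is used. The only implicit assumptions you make --- that the domains of $A_1$ and $A_2$ contain at least two values and that $K$ contains a nonzero element --- are also made in the paper's own construction, so they are not a gap relative to it.
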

\begin{proof}
We will prove the contrapositive, that is,  if the hypergraph
$H$ is not $\gamma$-acyclic, then there is a connected sequential \cjoin~expression  $E$ over $H$ such that $E$ is not monotone on $\mathbb K$.  So, assume that $H$ is not $\gamma$-acyclic. We distinguish two cases, namely, the case in which $H$ is not  $\beta$-acyclic and the case in which $H$ is $\beta$-acyclic.

\noindent{\emph{Case 1:}} Assume that $H$ is not $\beta$-acyclic. By the definition of $\beta$-acyclicity, there is a sub-hypergraph $H'$ of $H$ that is cyclic. Moreover, we may assume that $H'$ is connected, since if every connected component of $H'$ were acyclic, then it is easy to see that $H'$ would be acyclic as well. Let $X_1,\ldots,X_k$ be a list of the hyperedges of $H'$. Since $H'$ is cyclic,  the first part of Theorem \ref{thm:acyc-necessity} implies that there are $\mathbb K$-relations $R_1(X_1),\ldots,R_k(X_k)$ that are pairwise consistent, but not globally consistent. Since $H'$ is connected, there is a sequence
$Y_1,\ldots,Y_t$ of not necessarily distinct sets of attributes such that the following hold: (a) each $Y_j$ is one of the hyperedges  of $H'$,  i.e., $Y_j=X_{i_j}$, where $i_j\in \{1, \ldots, k\}$; (b) each hyperedge $X_i$ of $H'$ appears in the sequence $Y_1,\ldots, Y_t$; and (c) for every $j$ with $1\leq j< t$, we have that $Y_j\cap Y_{j+1}\not = \emptyset$. Let $E$ be the sequential \cjoin~expression
$(((\cdots (Y_1\cj Y_2) \cj \cdots )\cj Y_{t-1} )\cj Y_t)$. Then $E$ is a connected \cjoin~expression because $Y_j\cap Y_{j+1}\not = \emptyset$ holds, for every $j$ with $1\leq j < t-1$. We now claim that there is no consistency witness function $W$ on $\mathbb K$ such that $E$ is monotone with respect to $W$. To see this, let $W$ be a consistency witness function on $\mathbb K$ and consider the  pairwise consistent $\mathbb K$-relations $R_1(X_1),\ldots,R_k(X_k)$. If $E$ were monotone with respect to $W$, then, by Proposition \ref{prop:monotone-general}, the $\mathbb K$-relation $E(W,R_{i_1},\ldots,R_{i_t})$    
is a global consistency witness for the relations $R_{i_1},\ldots,
R_{i_t}$, hence it is a global consistency witness for the relations $R_1,\ldots,R_k$ since every hyperedge $X_i$ appears in  the sequence $Y_1,\ldots, Y_t$. This, however, is a contradiction since the relations $R_1,\ldots,R_k$ are not globally consistent.

\noindent{\emph{Case 2:}} Assume that $H$ is $\beta$-acyclic.
Since $H$ is not $\gamma$-acyclic, Proposition
\ref{prop:brault-baron} implies that 
there are three attributes $A$, $B$, $C$ and three hyperedges $Y_1, Y_2, Y_3$ of $H$  such that $Y_1\cap \{A,B,C\} = \{A,B\}$, $Y_2 \cap \{A,B,C\} = \{A,C\}$, and $Y_3 \cap \{A,B,C\} = \{A.B,C\}$.   Let $D_1, D_2, D_3$ be the remaining sets of attributes in $Y_1,Y_2, Y_3$, respectively; thus,  $Y_1=\{A,B\}\cup D_1$, $Y_2= \{A,C\}\cup D_2$, $Y_3 = \{A,B,C\}\cup D_3$. 
Let $E$ be the sequential \cjoin~expression $((Y_1\cj Y_2)\cj Y_3))$, which is clearly connected since $\{A,B,C\} \subseteq (Y_1\cup\ Y_2)\cap Y_3$.

We now claim that there is no consistency witness function $W$ on $\mathbb K$ such that $E$ is monotone with respect to $W$. 
Consider the following two $\mathbb K$-relations
$R_1(A,B,D_1), R_2(A,C,D_2)$:
\begin{itemize}
\item 
$R_1(f,f,{\bf f}_1) = a$, $R_1(f,t,{\bf f}_1)=a$, and $R_1(x,y, {\bf z}_1)=0$, for all other values,
where ${\bf f}_1=(f,\ldots,f)$  and the length of the tuple ${\bf f}_1$ is the cardinality of the set $D_1$,
\item 
$R_2(f,f,{\bf f}_2) = a$, $R_2(f,t,{\bf f}_2)=a$, and $R_2(x,y,{\bf z}_2)=0$, for all other values,
where ${\bf f}_2=(f,\ldots,f)$,
and the length of the tuple ${\bf f}_2$ is the cardinality of the set $D_2$.
\end{itemize}
Consider the following two $\mathbb K$-relations $S_1(A,B,C,D_1,D_2)$
and $S_2(A,B,C,D_1,D_2)$:
\begin{itemize}
    \item $S_1(f,f,f,{\bf f})=a$,
    $S_1(f,t,t,{\bf f})=a$, and 
    $S_1(x,y,z,{\bf w})=0$, for all other values.
    \item $S_2(f,f,t,{\bf f})=a$, 
    $S_2(f,t,f,{\bf f})=a$, and $S_2(x,y,z,{\bf w}) =0$, for all other values, where
    ${\bf f}$ is a tuple of $f$'s of length equal to the cardinality of the set $D_1\cup D_2$.
\end{itemize}
 It is easy to verify that 
 $S_1(A,B,C,D_1,D_2)$
and $S_2(A,B,C,D_1,D_2)$
 are two different consistency witnesses for the relations $R_1(A,B,D_1)$ and $R_2(A,C)$.
 
Now, let $W$ be an arbitrary witness function on $\mathbb K$.
Since the relations $R_1$
and $R_2$ are consistent, we have that the $\mathbb K$-relation
$W(R_1,R_2)$ is a consistency witness function for $R_1$ and $R_2$.
We distinguish the following two sub-cases.

\noindent{\emph{Sub-case 1:}}
$W(R_1,R_2)[ABC]=S_1[ABC]$. In this case,
let $R_3(A,B,C,D_3)$ be the $\mathbb K$-relation
such that 
 $R_3(f,f,t,{\bf f}_3)=a$, 
    $R_3(f,t,f,{\bf f}_3)=a$, and $R_3(x,y,z,{\bf w}) =0$, for all other values, where
    ${\bf f}_3$ is a tuple of $f$'s of length equal to the cardinality of the set $D_3$.
Observe that $R_3[ABC]=S_2[ABC]$.
    Since $S_2$ is a consistency witness for $R_1$ and $R_2$, we have that  the relations $R_1,R_2,R_3$ are pairwise consistent. However, the $\mathbb K$-relations $W(R_1,R_2)=S_1$
and $R_3$ are not consistent, 
since $S_1[ABC] \not =  S_2[ABC]$ and $S_2[ABC]= R_3[ABC]$.

\noindent{\emph{Sub-case 2:}}
$W(R_1,R_2)[ABC]\not =S_1[ABC]$. In this case,
let $R_3(A,B,C,D_3)$ be the $\mathbb K$-relation
such that 
 $R_3(f,f,f,{\bf f}_3)=a$, 
    $R_3(f,t,t,{\bf f}_3)=a$, and $R_3(x,y,z,{\bf w}) =0$, for all other values, where
    ${\bf f}_3$ is a tuple of $f$'s of length equal to the cardinality of the set $D_3$.
    Observe that $R_3[ABC]=S_1[ABC].$
Since $S_1$ is a consistency witness for $R_1$ and $R_2$, we have that  the relations $R_1,R_2,R_3$ are pairwise consistent. 
However, the $\mathbb K$-relations $W(R_1,R_2)$
and $R_3$ are not consistent, 
since $W(R_1,R_2)[ABC]\not = S_1[ABC]=R_3[ABC]$.
\end{proof}

Since the hypergraph $H^*$ with hyperedges
$\{A,B,C\}, \{A,B\},  \{A,C\}$  is $\gamma$-acyclic, Theorem \ref{thm:gamma-necessity} implies that there is a connected sequential \cjoin~expression over $H^*$ that is not monotone on $\mathbb K$. Actually, the proof of Theorem \ref{thm:gamma-necessity} implies that $((\{A,B\} \cj \{A,C\}) \cj \{A,B,C\})$ is such an expression. We give a direct proof of this fact in the Appendix, since that direct proof led us to the proof of Theorem \ref{thm:gamma-necessity}.

The preceding Theorem \ref{thm:gamma-necessity} asserts that $\gamma$-acyclicity is a necessary condition for a hypergraph $H$ to have the property that every connected sequential \cjoin~expression over $H$ is monotone on $\mathbb K$, where $\mathbb K$ is an arbitrary positive commutative monoid.
The second main result in this section asserts that if
$\mathbb K$ has the transportation property, then
$\gamma$-acyclicity is a sufficient condition for a hypergraph to have the property that every connected sequential \cjoin~expression over $\mathbb K$ is strongly monotone on $\mathbb K$.

\begin{theorem} \label{thm:gamma-sufficient}
Let $\mathbb K$ be a positive commutative monoid that has the transportation property. If $H$ is a $\gamma$-acyclic hypergraph, then every connected sequential \cjoin~expression over $H$ is strongly monotone on $\mathbb K$.
\end{theorem}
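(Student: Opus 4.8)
The plan is to fix a connected sequential \cjoin~expression $E = (((\cdots(Y_1 \cj Y_2) \cj \cdots) \cj Y_{k-1}) \cj Y_k)$ over $H$ --- where each $Y_j$ is a hyperedge $X_{i_j}$ of $H$ --- together with an arbitrary consistency witness function $W$ on $\mathbb{K}$ and an arbitrary pairwise consistent collection $R_1(X_1),\ldots,R_m(X_m)$ of $\mathbb{K}$-relations, and to show that $E$ is monotone with respect to $W$ and $R_1,\ldots,R_m$; since $W$ is arbitrary this gives strong monotonicity on $\mathbb{K}$. Write $E^{(j)}$ for the prefix expression ending in $Y_j$, put $T_j := E^{(j)}(W,R_{i_1},\ldots,R_{i_j})$ and $Z_j := Y_1 \cup \cdots \cup Y_j$, so $T_j$ is a $\mathbb{K}$-relation over $Z_j$. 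I would prove by induction on $j$ the conjunction of: (a) $E^{(j)}$ is monotone with respect to $W$ and $R_{i_1},\ldots,R_{i_j}$; and (b) $T_j$ is a global consistency witness for $R_{i_1},\ldots,R_{i_j}$. The base case $j=1$ is immediate. For the inductive step, the subexpressions of $E^{(j)} = E^{(j-1)} \cj Y_j$ of the form $E_1 \cj E_2$ are $E^{(j)}$ itself together with those of $E^{(j-1)}$, so by (a) for $j-1$ the only new requirement for (a) at $j$ is that $T_{j-1}$ and $R_{i_j}$ be consistent; and once (a) holds at $j$, Proposition~\ref{prop:monotone-general} applied to the sub-schema $Y_1,\ldots,Y_j$ (in which every hyperedge occurs in $E^{(j)}$) yields (b) at $j$. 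Taking $j=k$, statement (a) says precisely that $E$ is monotone with respect to $W$ and $R_1,\ldots,R_m$ (only the relations occurring in $E$ affecting this).

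Thus the whole proof reduces to the inductive step's claim: \emph{assuming $T_{j-1}$ is a global consistency witness for $R_{i_1},\ldots,R_{i_{j-1}}$, the $\mathbb{K}$-relations $T_{j-1}$ and $R_{i_j}$ are consistent.} Here I would use the transportation property: by Theorem~\ref{thm:TP} it implies the inner consistency property, so it suffices to show that $T_{j-1}$ and $R_{i_j}$ are \emph{inner} consistent, i.e.\ that $T_{j-1}[W_j] = R_{i_j}[W_j]$, where $W_j := Z_{j-1} \cap Y_j$ is the set of common attributes of the two relations; because $E$ is connected, $W_j \ne \emptyset$. The crucial input at this point is a purely combinatorial lemma about $\gamma$-acyclic hypergraphs, the $\gamma$-acyclic analogue of the running intersection property:

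\smallskip
\noindent\textit{Lemma.} If $H$ is $\gamma$-acyclic and $Y_1,\ldots,Y_k$ are hyperedges of $H$ with $Z_{j-1} \cap Y_j \ne \emptyset$ for all $j$ with $2 \le j \le k$, then for each such $j$ there is an index $l < j$ with $Z_{j-1} \cap Y_j \subseteq Y_l$.
\smallskip

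Granting the Lemma, pick such an $l$; then $W_j \subseteq Y_l \cap Y_j$. By the inductive hypothesis $T_{j-1}[Y_l] = R_{i_l}$, so $T_{j-1}[W_j] = T_{j-1}[Y_l][W_j] = R_{i_l}[W_j]$ by Proposition~\ref{lem:easyfacts1}. Moreover $R_{i_l}$ and $R_{i_j}$ are pairwise consistent, hence inner consistent, so $R_{i_l}[Y_l \cap Y_j] = R_{i_j}[Y_l \cap Y_j]$, and projecting further onto $W_j$ (again by Proposition~\ref{lem:easyfacts1}) gives $R_{i_l}[W_j] = R_{i_j}[W_j]$. Chaining these equalities yields $T_{j-1}[W_j] = R_{i_j}[W_j]$, which is exactly what we need.

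It remains to prove the Lemma, and this is where I expect the real work --- and the main obstacle --- to lie; it is essentially the combinatorial fact behind Fagin's Theorem~\ref{thm:fagin-gamma}. I would argue by contradiction, choosing a counterexample with $j$ as small as possible (and may assume the $Y_j$ are pairwise distinct, since a repetition $Y_j = Y_l$ with $l<j$ satisfies the conclusion trivially, and repetitions never enlarge the unions $Z_{j-1}$). Set $W := Z_{j-1} \cap Y_j$: then $W \subseteq Y_j$, the hyperedges $Y_1,\ldots,Y_{j-1}$ cover $W$, no single one of them contains $W$, and $\{Y_1,\ldots,Y_{j-1}\}$ is a connected hypergraph because the ordering is connected. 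Using that $\gamma$-acyclicity is hereditary both under taking sub-hypergraphs and under taking induced sub-hypergraphs (restricting the node set), one may pass to $H[W]$, which moreover has $W$ itself as a hyperedge. From this configuration I would extract a forbidden pattern: tracing a shortest path among $Y_1,\ldots,Y_{j-1}$ between two hyperedges that together "spread" $W$ and closing it through $Y_j$ produces either a weak $\gamma$-cycle --- contradicting the definition of $\gamma$-acyclicity --- or, in the degenerate short-path cases, three nodes $A,B,C$ for which the hypergraph $H^*$ with hyperedges $\{A,B,C\},\{A,B\},\{A,C\}$ is a sub-hypergraph of $H[\{A,B,C\}]$, contradicting Proposition~\ref{prop:brault-baron}. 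The delicate bookkeeping is ensuring that the hyperedges and the connecting nodes along this path can be chosen to be distinct, with the first two connecting nodes exclusive to their two incident hyperedges, exactly as the definition of a weak $\gamma$-cycle demands; handling the small cases together with this exclusivity requirement is, I expect, the crux of the argument.
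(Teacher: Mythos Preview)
Your proposal is correct and takes essentially the same approach as the paper: the paper argues the contrapositive, using the inner consistency property to show that if monotonicity first fails at index~$j$ then $Y:=Z_{j-1}\cap Y_j$ is contained in no $Y_l$ with $l<j$ (this is your Lemma, contraposed, together with your ``chaining equalities'' step), and then constructs a weak $\gamma$-cycle by choosing the hyperedge $Y_l$ that \emph{maximizes} $|Y_l\cap Y|$, taking a shortest path in the connected set $\{Y_1,\ldots,Y_{j-1}\}$ from it to a hyperedge containing a node of $Y$ it misses, and closing through $Y_j$. One minor caution: the paper stays in $H$ throughout rather than passing to $H[W]$ (the connecting nodes $B_i$ along the path need not lie in $W$, so restricting would break connectedness), and its weak $\gamma$-cycle construction handles all path lengths uniformly, so no separate appeal to Proposition~\ref{prop:brault-baron} is needed.
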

\begin{proof}
 We will establish the contrapositive, that is, we will show that if $H$ does not have the property that 
  every connected sequential \cjoin~expression over $H$ is strongly monotone on $\mathbb K$, 
 then $H$ is $\gamma$-cyclic.

 Assume that $H$ lacks the above property. Then there exist a connected sequential \cjoin~expression $E=
  (((\cdots (X_1\cj X_2) \cj \cdots )\cj X_{m-1} )\cj X_m)$ over $H$, 
 a consistency witness function $W$ on $\mathbb K$, and a collection $R_1(X_1),\ldots,R_m(X_m)$ of pairwise consistent $\mathbb K$-relations such that  $E$  is not monotone w.r.t.\ to $W$ and  $R_1(X_1),\ldots,R_m(X_m)$. In turn, this means that there is some index $j<m$ such that the $\mathbb K$-relation $E_j(W,R_1,\ldots,R_j)$ is not consistent with the $\mathbb K$-relation $R_{j+1}$, where $E_j$ is the sequential \cjoin~expression 
 $(((\cdots (X_1\cj X_2) \cj \cdots )\cj X_{j-1} )\cj X_j)$.
 Let $j$ be the smallest index with this property; thus, if $i< j$, then the  $\mathbb K$-relation $E_i(W,R_1,\ldots,R_i)$ is consistent with the $\mathbb K$-relation $R_{i+1}$.

Let $Y=(X_1\cup \cdots \cup X_j)\cap X_{j+1}$. 
Note that $Y\not = \emptyset$, since 
$E$ is a connected sequential \cjoin~expression.
Note also that $X_1\cup \cdots \cup X_j$ is the set of attributes of 
$E_j(W,R_1,\ldots,R_j)$, while $X_{j+1}$ is the set of attributes of  $R_{j+1}$. 
Since $\mathbb K$ has the transportation property, Theorem \ref{thm:TP} implies that $\mathbb K$~has the inner consistency property, which means that if two $\mathbb K$-relations $Q_1(X_1)$ and $Q_2(X_2)$ are inner consistent (i.e., $Q_1[X_1\cap X_2]=Q_2[X_1\cap X_2])$, then they are consistent. Thus, since the  $\mathbb K$-relations
$E_j(W,R_1,\ldots,R_j)$ and $R_{j+1}$
are not consistent,
 it follows that
$E_j(W,R_1,\ldots,R_j)$ and $R
_{j+1}$ are not inner consistent, hence
$E_j(W,R_1,\ldots,R_j)[Y]\not = R_{j+1}[Y].$

\smallskip

\noindent{\emph{Claim 1:}} For every  $k\leq j$, we have that $Y\not \subseteq X_k$.

To establish the claim, assume that 
$Y \subseteq X_k$, for some $k\leq j$. By the pairwise consistency of $R_1,\ldots,R_m$, we have that  $R_k$ and $R_{j+1}$ are consistent $\mathbb K$-relations, hence
$$R_k[X_k\cap X_{j+1}]=R_{j+1}[X_k\cap X_{j+1}].$$
Since $Y\subseteq X_k\cap X_{j+1}$, we have that
$R_k[Y]=R_{j+1}[Y]$. Proposition \ref{prop:monotone-general} and the minimality assumption about $j$  imply that 
the $\mathbb K$-relation $E_j(W,R_1,\ldots,R_j)$ is a global consistency witness for $R_1,\ldots,R_j$. Since $k\leq j$, we have that $E_j(W,R_1,\ldots,R_j)[X_k]=R_k$. Furthermore, since $Y\subseteq X_k$, we have
that $E_j(W,R_1,\ldots,R_j)[Y]=R_k[Y]$, hence
$E_j(W,R_1,\ldots,R_j)[Y]=R_{j+1}[Y]$; this  contradicts the earlier finding that $E_j(W,R_1,\ldots,R_j)[Y]\not = R_{j+1}[Y]$,
hence  Claim 1 has been established.

So, we now know that for every $k\leq j$, we have that $Y\not \subseteq X_k$. Choose an index $k\leq j$ so that the cardinality $|X_k\cap Y|$ of the set 
 $X_k\cap Y$ is the largest of the cardinalities $|X_i\cap Y|$ of the sets $X_i\cap Y$, $1\leq i\leq j$. 
 
 Pick a node $A_1$ such that $A_1\in Y\setminus X_k$. 
 Since $Y\subseteq X_1\cup \cdots \cup X_j$, there is some $i\leq j$ such that $A_1\in X_i$. Furthermore, since 
 $E=
  (((\cdots (X_1\cj X_2) \cj \cdots )\cj X_{m-1} )\cj X_m)$ is a connected sequential \cjoin~expression over $H$, it is easy to see that  $\{X_1,\ldots,X_j\}$ 
 is a connected set of hyperedges of $H$. Let $p$ be the length of the shortest path within  $\{X_1,\ldots,X_j\}$ from $X_k$ to a hyperedge containing $A_1$. This means that there is a sequence
 $S_1,\ldots,S_p$ of sets with the following properties:
 \begin{enumerate}
     \item Each set $S_i$ is one of the hyperedges $X_1,\ldots,X_j$.
     \item $S_1=X_k$.
     \item $A_1\in S_p$.
     \item $S_i\cap S_{i+1}\not = \emptyset$, $1\leq i\leq p$.
     \item $p$ is the smallest possible number for which  a sequence $S_1,\ldots,S_p$ with properties (1)-(4) exists.
 \end{enumerate}
 The minimality of $p$ implies that $S_1,\ldots,S_p$ are distinct hyperedges of $H$.
 By the maximality of $k$, we have that $|S_p\cap Y|\leq |X_k\cap Y|$.
 Since $A_1\in S_p\cap Y$ and $A_1\not \in X_k\cap Y$, there must exist a node $A_2\in (X_k\cap Y)\setminus (S_p\cap Y)$.
 Let $n$ be the largest number such that $A_2\in S_n$ and $1\leq n< p$. This number $n$ exists because $S_1=X_k$.
By property (4) of the sequence $S_1\ldots,S_p$, there are nodes $B_i$, $n\leq i< j$  such that $B_i\in S_i\cap S_{i+1}$.
Consider now the sequence

\centerline{
$(S_p,A_1,X_{j+1},A_2,S_n,B_n,S_{n+1},B_{n+1},\ldots, B_{p-1},S_p)$.}

\smallskip

 \noindent{\emph{Claim 2:}}
$(S_p,A_1,X_{j+1},A_2,S_n,B_n,S_{n+1},B_{n+1},\ldots, B_{p-1},S_p)$ is a weak 
 $\gamma$-cycle in $H$.

We have to verify that this sequence satisfies the conditions defining a weak $\gamma$-cycle, which were spelled out in the beginning of this section. 
\begin{enumerate}
 \item The hyperedges $S_p, X_{j+1}, S_n, S_{n+1},\ldots, S_{p-1}$ are distinct, since
 the hyperedges $S_1,\ldots,S_p$ are distinct and also are among the hyperedges $X_1,\ldots,X_j$.
 \item The nodes $A_1,A_2,B_n,B_{n+1},\ldots ,B_{p-1}$ are  distinct for the following reasons: first,
 $A_1\not = A_2$ because $A_1\not \in X_k\cap Y$ and $A_2\in X_k\cap Y$; second, $A_1$ is different from $B_n,B_{n+1}, \ldots, B_{p-1}$, since $A_1 \not \in S_i$ for $n\leq i< p$; $A_2$ is different from $B_n,B_{n+1}, \ldots, B_{p-1}$ by the choice of $n$ as the largest number such that $A_2\in S_n$ and the fact that $B_i\in S_i\cap S_{i+1}$; third, the $B_i$'s are distinct, else the path would have been shorter.
 \item First, $A_1\in S_p\cap X_{j+1}$ by the choice of $S_p$ and the fact that
 $A_1\in Y\subseteq X_{j+1}$; second, $A_2\in X_{j+1}\cap S_n$ by the choice of $A_2$; third, $B_i\in S_i\cap S_{i+1}$ by the choice of the $B_i$'s.
\item $A_1$ is not in any of the hyperedges $S_n,S_{n+1}, \ldots, S_{p-1}$ by the properties of the path $S_1,\ldots, S_p$; finally, $A_2$ is not in any of the hyperedges $S_{n+1}m\ldots,S_p$ by the choice of $A_2$.
 \end{enumerate}
 This completes the proof of Claim 2, hence the hypergraph $H$ is $\gamma$-cyclic.
\end{proof}

The reader familiar with Fagin's paper \cite{DBLP:journals/jacm/Fagin83} will undoubtedly notice that the proof of Theorem \ref{thm:gamma-sufficient} has a very similar structure to the proof of the result that
if a hypergraph $H$ is $\gamma$-acyclic, then every connected sequential join expression is monotone (see  \cite[pages 539-540]{DBLP:journals/jacm/Fagin83}). The main difference is that here, instead of the standard join,  we use an arbitrary consistency witness function and, thus, obtain a stronger result about
annotated relations and arbitrary consistency witness functions. Furthermore, it is of the essence that the annotations come from a monoid $\mathbb K$ that has the transportation property (which certainly the Boolean monoid has) because, in the first part of the proof before Claim 1, we used in a crucial way that $\mathbb K$ has the inner consistency property (which is equivalent to the transportation property).

As an immediate consequence of Theorems \ref{thm:gamma-necessity} and \ref{thm:gamma-sufficient}, we obtain the following result.

\begin{corollary} \label{cor:gamma-characterize}
Let $\mathbb K$ be a positive commutative monoid that has the transportation property. For
every hypergraph $H$, the following statements are equivalent:
\begin{enumerate}
\item $H$ is $\gamma$-acyclic.
\item  Every connected sequential \cjoin~expression over $H$ is strongly monotone on $\mathbb K$.
\item Every connected sequential \cjoin-expression over $H$ is monotone on $\mathbb K$.
\end{enumerate}
\end{corollary}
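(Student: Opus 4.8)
The plan is to derive the three-way equivalence by chaining the implications $(1)\Rightarrow(2)\Rightarrow(3)\Rightarrow(1)$; two of the three links are exactly the theorems just proved, and the third is essentially a definitional triviality, so the corollary is really just a repackaging of Theorems~\ref{thm:gamma-necessity} and~\ref{thm:gamma-sufficient}.

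First I would use Theorem~\ref{thm:gamma-sufficient} for $(1)\Rightarrow(2)$: since $\mathbb K$ is assumed to have the transportation property, that theorem says verbatim that $\gamma$-acyclicity of $H$ implies that every connected sequential \cjoin~expression over $H$ is strongly monotone on $\mathbb K$. This is the only implication that consumes the transportation-property hypothesis on $\mathbb K$.

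Next, $(2)\Rightarrow(3)$. Being strongly monotone on $\mathbb K$ means, by definition, being monotone with respect to \emph{every} consistency witness function on $\mathbb K$ together with every pairwise consistent collection, whereas being monotone on $\mathbb K$ asks only for \emph{some} such function. So the implication is immediate once one knows that at least one consistency witness function on $\mathbb K$ exists. This existence is guaranteed by the transportation property: since $\mathbb K$ then has the inner consistency property (Theorem~\ref{thm:TP}), two consistent $\mathbb K$-relations $R(X)$ and $S(Y)$ agree on the marginal over $X\cap Y$, and one can produce a witness over $XY$ by invoking the transportation property fibrewise over that common marginal and fixing one choice of the transportation matrices. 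This small existence check is the only point in the whole argument that requires a moment's attention.

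Finally, $(3)\Rightarrow(1)$ is precisely Theorem~\ref{thm:gamma-necessity}, which holds for an arbitrary positive commutative monoid: if every connected sequential \cjoin-expression over $H$ is monotone on $\mathbb K$, then $H$ is $\gamma$-acyclic. Chaining the three implications closes the cycle and yields the equivalence. I do not anticipate any genuine obstacle here, since the substance lives entirely in Theorems~\ref{thm:gamma-necessity} and~\ref{thm:gamma-sufficient}; the only thing worth flagging is the nonemptiness of the class of consistency witness functions on $\mathbb K$, which is what lets us pass from ``strongly monotone'' to ``monotone.''
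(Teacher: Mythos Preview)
Your proof is essentially identical to the paper's: the same cycle $(1)\Rightarrow(2)\Rightarrow(3)\Rightarrow(1)$ via Theorems~\ref{thm:gamma-sufficient} and~\ref{thm:gamma-necessity}, with $(2)\Rightarrow(3)$ handled ``from the definitions.'' Your flagged concern about the existence of at least one consistency witness function is a legitimate logical point, but it resolves without invoking the transportation property: over \emph{any} positive commutative monoid such a function exists trivially, since for each consistent pair $(R,S)$ one may simply choose any witness (which exists by the definition of consistency) and for inconsistent pairs return an arbitrary $\mathbb K$-relation over $XY$.
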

\begin{proof}
The implication $(1)\Longrightarrow (2)$ follows from Theorem \ref{thm:gamma-sufficient}; the implication
$(2)\Longrightarrow (3)$ follows from the definitions;
the implication $(3) \Longrightarrow (1)$ follows from
Theorem \ref{thm:gamma-necessity}.
\end{proof}

Finally, we characterize  the transportation property in terms of $\gamma$-acyclicity.

\begin{theorem} \label{thm:transp-gamma}
 Let $\mathbb K$ be a  positive commutative monoid. Then the following statements are equivalent:
\begin{enumerate}
\item ${\mathbb K}$~has the transportation property.
\item Every acyclic hypergraph $H$ has the property that every connected sequential \cjoin~expression over $H$ is strongly monotone on $\mathbb K$.
\item Every acyclic hypergraph has the property that every connected sequential \cjoin-expression over $H$ is monotone on $\mathbb K$.
\item The $3$-path  hypergraph $P_3$ has the property that every connected sequential \cjoin-expression over $P_3$ is monotone on $\mathbb K$.
\end{enumerate}
 \end{theorem}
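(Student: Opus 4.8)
The plan is to prove the cycle of implications $(1) \Rightarrow (2) \Rightarrow (3) \Rightarrow (4) \Rightarrow (1)$. Three of the four steps are immediate once the relevant earlier result is pinpointed, so the theorem is in effect a corollary of Theorem~\ref{thm:gamma-sufficient} and Theorem~\ref{thm:TP} together with the fact that the $3$-path hypergraph $P_3$ is $\gamma$-acyclic.

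For $(1) \Rightarrow (2)$ I would simply invoke Theorem~\ref{thm:gamma-sufficient}: if $\mathbb K$ has the \ftp~and $H$ is $\gamma$-acyclic, that theorem already asserts that every connected sequential \cjoin~expression over $H$ is strongly monotone on $\mathbb K$. For $(2) \Rightarrow (3)$ it is enough to observe that a \cjoin~expression that is strongly monotone on $\mathbb K$ is in particular monotone on $\mathbb K$: strong monotonicity asks for the monotonicity condition with respect to \emph{every} consistency witness function on $\mathbb K$, and at least one such function always exists (for a consistent pair of $\mathbb K$-relations choose any consistency witness; for an inconsistent pair $R(X), S(Y)$ return, say, the empty $\mathbb K$-relation over $XY$), so the existential requirement in the definition of ``monotone on $\mathbb K$'' is satisfied. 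For $(3) \Rightarrow (4)$ I would instantiate $(3)$ at $H = P_3$, which is $\gamma$-acyclic.

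The implication $(4) \Rightarrow (1)$ is where the actual argument sits. I would write the hyperedges of $P_3$ as $X_1 = \{A_1,A_2\}$, $X_2 = \{A_2,A_3\}$, $X_3 = \{A_3,A_4\}$ and consider the sequential \cjoin~expression $E = ((X_1 \cj X_2) \cj X_3)$. This is a sequential \cjoin~expression over $P_3$; it is connected, because $X_1 \cap X_2 = \{A_2\} \neq \emptyset$ and $(X_1 \cup X_2) \cap X_3 = \{A_3\} \neq \emptyset$; and every hyperedge of $P_3$ occurs in it. By $(4)$, $E$ is monotone on $\mathbb K$, so there is a \emph{single} consistency witness function $W$ on $\mathbb K$ with the property that $E$ is monotone with respect to $W$ and every pairwise consistent collection $R_1(X_1), R_2(X_2), R_3(X_3)$ of $\mathbb K$-relations over $P_3$. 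Fixing such a collection and applying Proposition~\ref{prop:monotone-general} (whose hypotheses hold, since every $X_i$ occurs in $E$), I conclude that $E(W, R_1, R_2, R_3)$ is a global consistency witness for $R_1, R_2, R_3$. Since the collection was arbitrary, $P_3$ has the \ltgc~for $\mathbb K$-relations; this is statement~(4) of Theorem~\ref{thm:TP}, and that theorem then delivers statement~(1) there, namely that $\mathbb K$ has the \ftp, closing the cycle.

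The one point that calls for care is this last step, and it is less an obstacle than something to get exactly right: one must choose a connected sequential \cjoin~expression over $P_3$ in which \emph{all} hyperedges occur, so that Proposition~\ref{prop:monotone-general} can be invoked, and one must use that ``monotone on $\mathbb K$'' supplies one witness function $W$ working uniformly across all pairwise consistent inputs --- precisely what is needed to pass to the \ltgc~for $\mathbb K$-relations and then to Theorem~\ref{thm:TP}. No new combinatorial argument about $\gamma$-cycles is required here, since the hard combinatorics already reside in Theorems~\ref{thm:gamma-necessity} and~\ref{thm:gamma-sufficient}; everything else is bookkeeping.
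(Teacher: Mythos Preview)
Your proposal is correct and follows essentially the same route as the paper: the identical cycle $(1)\Rightarrow(2)\Rightarrow(3)\Rightarrow(4)\Rightarrow(1)$, with $(1)\Rightarrow(2)$ via Theorem~\ref{thm:gamma-sufficient}, $(2)\Rightarrow(3)$ from the definitions, $(3)\Rightarrow(4)$ by instantiating at $P_3$, and $(4)\Rightarrow(1)$ by taking the expression $((X_1\cj X_2)\cj X_3)$, applying Proposition~\ref{prop:monotone-general}, and then invoking Theorem~\ref{thm:TP}. Your added remarks (existence of at least one consistency witness function, explicit check that $E$ is connected and contains every hyperedge) are minor elaborations the paper leaves implicit.
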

 \begin{proof}
 The implications $(1)\Longrightarrow (2)$ follows 
 from Theorem
 \ref{thm:gamma-sufficient}.
 The implication $(2)\Longrightarrow (3)$ follows from the definitions. 
 The implication $(3)\Longrightarrow (4)$ follows from the fact that the $3$-path hypergraph $P_3$ is acyclic. Finally, towards  the implication $(4)\Longrightarrow (1)$,
 let $P_3$ be the $3$-path hypergraph 
with hyperedges $\{A_1,A_2\}, \{A_2,A_3\}, \{A_3,A_4\}$. Assume that $P_3$ has the property that every connected sequential c-join~expression over $P_3$ is monotone on $\mathbb K$.
Let $E$ be the sequential c-join~expression $((\{A_1,A_2\} \cj \{A_2,A_3\})\cj  \{A_3,A_4\})$ over $H$ and let $W$ be a consistency witness function on $\mathbb K$ such that $P_3$ is monotone with respect to $W$ and every three
pairwise consistent $\mathbb K$-relations $R_1(A_1, A_2), R_2(A_2,A_3), R_2(A_3,A_4)$. But then $P_3$ has the \ltgc~property for $\mathbb K$-relations, because
if $R_1(A_1, A_2), R_2(A_2,A_3), R_2(A_3,A_4)$ are three pairwise consistent
$\mathbb K$-relations, then, by Proposition \ref{prop:monotone-general}, the $\mathbb K$-relation
$E(W,R_1,R_2,R_3)$ is a global consistency witness for the $\mathbb K$-relations $R_1(A_1, A_2), R_2(A_2,A_3), R_2(A_3,A_4)$.
Consequently, by Theorem \ref{thm:TP}, the monoid $\mathbb K$ has the transportation property.
\end{proof}

\section{Concluding Remarks}
In this paper, we showed that the main desirable semantic properties of $\gamma$-acyclic hypergraphs and standard relations extend to desirable semantic properties of $\gamma$-acyclic hypergraphs and annotated relations, as long as the annotations come from a positive commutative monoid possessing the transportation property. It can also be shown that other desirable  semantic properties of $\gamma$-acyclicity, such as join dependencies and lossless joins  (see Fagin \cite{DBLP:journals/jacm/Fagin83}), have suitable extensions to $\gamma$-acyclic hypergraphs and annotated relations; the notions of  a consistency witness function and a \cjoin~expression are used in these results.

Going back to acyclic hypergraphs, Beeri et al.\ \cite{BeeriFaginMaierYannakakis1983} characterized acyclic hypergraphs in terms of semijoin programs and full reducers, which make it possible to evaluate queries efficiently in a distributed setting. It remains an open problem to determine whether or not  there is an abstract notion of a semijoin program that can be used to extend these characterizations to acyclic schemas and annotated relations.

\paragraph{Acknowledgments} Atserias was partially supported by grant                     
no. PID2022-138506NB-C22 (PROOFS BEYOND) and the Severo Ochoa and María         
de Maeztu Program for Centers and Units of Excellence in R\&D (CEX\-2020-001084\
-M) of                                                                          
the AEI, and the CERCA and ICREA                                                
Academia Programmes of the Generalitat.

\bibliography{biblio}

\newpage

\section*{Appendix}

Let $\mathbb K$ be a positive commutative monoid.
We give a self-contained proof that  hypergraph the $H^*$ does not have the property that every connected sequential \cjoin~expression is monotone on $\mathbb K$.

\begin{proposition} \label{prop:H*}
If $\mathbb K$ is a positive commutative monoid and  $H^*$ is the hypergraph
with hyperedges $\{A,B,C\}, \{A,B\}, \{A,C\}$, then  the connected sequential
\cjoin~expression
$$((\{A,B\} \cj \{A,C\}) \cj \{A,B,C\}).$$ 
is not monotone on $\mathbb K$.
\end{proposition}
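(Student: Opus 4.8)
The plan is to argue directly from the definition of \emph{monotone on $\mathbb{K}$}. Since $E:=((\{A,B\}\cj\{A,C\})\cj\{A,B,C\})$ is monotone on $\mathbb{K}$ precisely when \emph{some} consistency witness function $W$ on $\mathbb{K}$ makes $E$ monotone with respect to $W$ and \emph{every} pairwise consistent collection of $\mathbb{K}$-relations over $H^*$, it suffices to fix an \emph{arbitrary} consistency witness function $W$ on $\mathbb{K}$ and exhibit three pairwise consistent $\mathbb{K}$-relations $R_1(A,B)$, $R_2(A,C)$, $R_3(A,B,C)$ such that the $\mathbb{K}$-relation $W(R_1,R_2)$ is not consistent with $R_3$. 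This is exactly the failure of the consistency condition at the outer $\cj$ of $E$ (the inner $\cj$ will be harmless by construction, since $R_1$ and $R_2$ will be consistent), hence the failure of monotonicity of $E$ with respect to $W$ and $R_1,R_2,R_3$; as $W$ is arbitrary, this proves the claim.

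First I would record the elementary observation that two $\mathbb{K}$-relations $P(Z)$ and $Q(Z)$ over the \emph{same} attribute set $Z$ are consistent if and only if $P=Q$: any consistency witness $T$ for them is a $\mathbb{K}$-relation over $Z\cup Z=Z$ with $T[Z]=P$ and $T[Z]=Q$, and $T[Z]=T$, so $P=Q$. Next, picking a nonzero element $a\in K$ (which exists since $|K|\ge 2$) and two distinct domain values $f\ne t$, I would define $R_1(A,B)$ by $R_1(f,f)=R_1(f,t)=a$ and $R_1=0$ otherwise, and $R_2(A,C)$ by $R_2(f,f)=R_2(f,t)=a$ and $R_2=0$ otherwise. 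A routine computation of marginals shows that the two $\mathbb{K}$-relations $S_1(A,B,C)$ given by $S_1(f,f,f)=S_1(f,t,t)=a$ (and $0$ elsewhere) and $S_2(A,B,C)$ given by $S_2(f,f,t)=S_2(f,t,f)=a$ (and $0$ elsewhere) both satisfy $S_i[AB]=R_1$ and $S_i[AC]=R_2$, while $S_1\ne S_2$ (they differ on $(f,f,f)$). In particular $R_1$ and $R_2$ are consistent, so $W(R_1,R_2)$ is a consistency witness for them; since $R_1,R_2$ have attribute sets whose union is $\{A,B,C\}$, the $\mathbb{K}$-relation $W(R_1,R_2)$ is over $\{A,B,C\}$ with $W(R_1,R_2)[AB]=R_1$ and $W(R_1,R_2)[AC]=R_2$.

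Then I would split into two cases on whether $W(R_1,R_2)=S_1$ or $W(R_1,R_2)\ne S_1$, setting $R_3:=S_2$ in the first case and $R_3:=S_1$ in the second. In either case $R_3\in\{S_1,S_2\}$, so $R_3[AB]=R_1$ and $R_3[AC]=R_2$; thus $R_3$ is itself a consistency witness for the pair $R_1,R_3$ and for the pair $R_2,R_3$, and $R_1,R_2$ are consistent via $S_1$, so $R_1,R_2,R_3$ are pairwise consistent. On the other hand, $W(R_1,R_2)$ and $R_3$ are both $\mathbb{K}$-relations over $\{A,B,C\}$ and, by the case choice, $W(R_1,R_2)\ne R_3$ (in the first case because $S_1\ne S_2$), so by the elementary observation above they are not consistent. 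Hence $E$ is not monotone with respect to $W$ and $R_1,R_2,R_3$, and since $W$ was arbitrary, $E$ is not monotone on $\mathbb{K}$. I do not expect a genuine obstacle here: the only care needed is the bookkeeping of marginals confirming that $S_1$ and $S_2$ are both consistency witnesses for $R_1,R_2$, together with the remark that having the same attribute set forces consistent relations to coincide, which is what lets us defeat any value of $W(R_1,R_2)$ by choosing $R_3$ among the two competing witnesses. This is precisely the specialization of Case~2 in the proof of Theorem~\ref{thm:gamma-necessity} to $D_1=D_2=D_3=\emptyset$.
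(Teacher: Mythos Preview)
Your proposal is correct and follows essentially the same approach as the paper's proof: the same $R_1,R_2$ and the same two competing witnesses $S_1,S_2$, with the same case split on whether $W(R_1,R_2)=S_1$. Your explicit remark that two $\mathbb{K}$-relations over the same attribute set are consistent iff equal is exactly the fact the paper invokes implicitly in both cases.
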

\begin{proof}
To show that the sequential
\cjoin~expression
$((\{A,B\} \cj \{A,C\})\cj \{A,B,C\})$ is not monotone on $\mathbb K$, we have to show that if $W$ is an arbitrary consistency witness function on $\mathbb K$, then there are three $\mathbb K$-relations
$R_1(A,B)$, $R_2(A,C)$, $R_3(A,B,C)$  that are pairwise consistent,
but the $\mathbb K$-relations
$W(R_1,R_2)$ and $R_3$ are not consistent.

Let $t$ and $f$ be two values such that $t\not = f$, and
let $a$ be an element in the universe of the monoid $\mathbb K$ such that $a\not =0$. Consider the
$\mathbb K$-relations
$R_1(A,B)$ and $R_2(A,C)$ such that
\begin{itemize}
\item 
$R_1(f,f) = a$, $R_1(f,t)=a$, and $R_1(x,y)=0$, for all other pairs.
\item 
$R_2(f,f) = a$, $R_2(f,t)=a$, and $R_2(x,y)=0$, for all other pairs.
\end{itemize}
It is easy to verify that the following two $\mathbb K$-relations $S_1(A,B,C)$
and $S_2(A,B,C)$ are consistency witnesses for the relations $R_1(A,B)$ and $R_2(A,C)$:
\begin{itemize}
    \item $S_1(f,f,f)=a$,
    $S_1(f,t,t)=a$, and 
    $S_1(x,y,z)=0$, for all other triples.
    \item $S_2(f,f,t)=a$, 
    $S_2(f,t,f)=a$, and $S_2(x,y,z) =0$, for all other triples.
\end{itemize}
Let $W$ be an arbitrary witness function on $\mathbb K$.
Since the relations $R_1$
and $R_2$ are consistent, we have that the $\mathbb K$-relation
$W(R_1,R_2)$ is a consistency witness function for $R_1$ and $R_2$.
We now distinguish the following two cases.

\noindent{\emph{Case 1:}}
$W(R_1,R_2)=S_1$. In this case,
let $R_3(A,B,C)=S_2(A,B,C)$. 
Since $S_2$ is a consistency witness for $R_1$ and $R_2$, we have that  the relations $R_1,R_2,R_3$ are pairwise consistent. However, $W(R_1,R_2)=S_1$
and $R_3=S_2$ are not consistent, 
since $S_1$ and $S_2$ are different $\mathbb K$-relations
on the same set of attributes.

\noindent{\emph{Case 2:}}
$W(R_1,R_2)\not =S_1$. In this case,
let $R_3(A,B,C)=S_1(A,B,C)$. 
Since $S_1$ is a consistency witness for $R_1$ and $R_2$, we have that  the relations $R_1,R_2,R_3$ are pairwise consistent. However, $W(R_1,R_2)$
and $R_3=S_1$ are not consistent, 
since $W(R_1,R_2)$ and $S_1$ are different $\mathbb K$-relations
on the same set of attributes.

Therefore, the connected sequential \cjoin~expression 
$((\{A,B\} \cj \{A,C\} )\cj \{A,B,C\})$ is not monotone on $\mathbb K$.  
\end{proof}

\end{document}